\theoremstyle{plain} 
\newtheorem{theorem}{Theorem} 
\newtheorem{lemma}{Lemma}
\theoremstyle{definition} 
{}
\theoremstyle{remark}
\title{Reversible Languages Having Finitely Many Reduced Automata}
\author{Kitti Gelle
\institute{University of Szeged\\Szeged, Hungary}
\email{kgelle@inf.u-szeged.hu}
\and
Szabolcs Iv\'an
\institute{University of Szeged\\Szeged, Hungary\thanks{Research supported by NKFI Grant no. 108448.}}
\email{szabivan@inf.u-szeged.hu}
}
\begin{document}
\maketitle

\begin{abstract}
Reversible forms of computations are often interesting from an energy efficiency point of view. When
the computation device in question is an automaton, it is known that the minimal reversible automaton
recognizing a given language is not necessarily unique, moreover, there are languages having
arbitrarily large reversible recognizers possessing no nontrivial “reversible” congruence. However,
the exact characterization of this class of languages was open. In this paper we give a forbidden
pattern capturing the reversible regular languages having only finitely many reduced reversible automata,
allowing an efficient (NL) decision procedure.
\end{abstract}

\section{Introduction}

Landauer's principle~\cite{landauer} states that any logically irreversible manipulation of information
-- such as the merging of two computation paths -- is accompanied by a corresponding
entropy increase of non-information-bearing degrees of freedom in the processing apparatus.
In practice, this can be read as ``merging two computation paths generates heat'', though
it has been demonstrated~\cite{vaccaro} that the entropy cost can be taken in e.g. angular momentum.
Being a principle in physics, there is some debate regarding its validity,
challenged~\cite{earmannorton,norton2005,norton2011}
and defended~\cite{bub,bennett,Ladyman200758} several times recently.
In any case, the study of \emph{reversible} computations, in which distinct computation paths never
merge, looks appealing. In the context of quantum computing, one allows only reversible logic gates~\cite{chuang}.
For classical Turing machines, it is known that each deterministic machine can be simulated by 
a reversible one, using the same amount of space~\cite{bennett72,langetapp}.
Hence, each regular language is accepted by a reversible \emph{two-way} deterministic finite automaton
(also shown in~\cite{kondacs}).

In the case of classical, i.e. \emph{one-way} automata, the situation is different:
not all regular languages can be recognized by a reversible automaton, not even if
we allow partial automata (that is, trap states can be removed, thus the transition
function being a partial one). Those languages that can be recognized by a reversible
one are called \emph{reversible languages}. It is clear that one has to allow being
partially defined at least since otherwise exactly the regular group languages
(those languages in whose minimal automata each letter induces a permutation)
would be reversible.

Several variants of reversible automata were defined and studied~\cite{pin92,angluin,lombardy}.
The variant we work with (partial deterministic automata with a single initial state and 
and arbitrary set of final states) have been treated in~\cite{kutrib14,kutrib15,holzer15,giovanni,giovanna,luca}.
In particular, in~\cite{holzer15} the class of reversible languages is characterized by means
of a forbidden pattern in the minimal automaton of the language in question, and an algorithm
is provided to compute a minimal reversible automaton for a reversible language, given its
minimal automaton. Here ``minimal'' means minimizing the number of states, and the minimal
reversible automaton is shown to be not necessarily unique.
In~\cite{giovanni}, the notion of \emph{reduced} reversible automata is introduced: a
reversible automaton is reduced if it is trim (all of its states are accessible and coaccessible),
and none of its nontrivial factor automata is reversible.
The authors characterize the class of those reversible languages (again, by developing a forbidden pattern)
having a unique reduced reversible automaton (up to isomorphism), and leave open the
problem to find a characterization of the class of those reversible languages
having finitely many reduced reversible automata (up to isomorphism).

In this paper we solve this open problem of~\cite{giovanni}, by also
developing a forbidden pattern
characterization which allows an $\mathbf{NL}$ algorithm.

\section{Notation}

We assume the reader has some knowledge in automata
and formal language theory (see e.g.~\cite{hopcroft}).

In this paper we consider deterministic \emph{partial} automata with a single initial state
and an arbitrary set of final states. 
That is, an \emph{automaton}, or DFA, is a tuple $M=(Q,\Sigma,\delta,q_0,F)$ with
$Q$ being the finite \emph{set of states},
$q_0\in Q$ the \emph{initial} state,
$F\subseteq Q$ the set of \emph{final} or \emph{accepting} states,
$\Sigma$ the finite, nonempty \emph{input alphabet} of symbols or \emph{letters}
and $\delta:Q\times\Sigma\to Q$ the \emph{partial} transition function
which is extended in the usual way to a partial function also
denoted by $\delta: Q\times\Sigma^*\to Q$ with
$\delta(q,\varepsilon)=q$ for the \emph{empty word} $\varepsilon$
and $\delta(q,wa)=\delta(\delta(q,w),a)$ if $\delta(q,w)$ is defined and
undefined otherwise.
When $\delta$ is understood from the context, we write $q\cdot w$ or $qw$ for $\delta(q,w)$
in order to ease notation.
When $M$ is an automaton and $q$ is a state of $M$, then \emph{the language recognized by $M$ from $q$}
is $L(M,q)=\{w\in\Sigma^*:qw\in F\}$.
The \emph{language recognized by $M$} is $L(M) = L(M,q_0)$.
A language is called \emph{regular} or \emph{recognizable} if some automaton recognizes it.

When $p$ and $q$ are states of the automata $M$ and $N$, respectively, we say that
$p$ and $q$ are \emph{equivalent}, denoted $p\equiv q$, if $L(M,p)=L(N,q)$.
(When $M$ or $N$ is unclear from the context, we may write $(M,p)\equiv(N,q)$.)
The automata $M$ and $N$ are said to be equivalent if their initial states are equivalent.

A state $q$ of $M$ is \emph{useful} if it is \emph{reachable}
($q_0w = q$ for some $w$) and \emph{productive} ($qw \in F$ for some $w$).
A DFA is \emph{trim} if it only has useful states.
When $L(M)$ is nonempty, one can erase the non-useful states of $M$:
the resulting automaton will be trim and equivalent to $M$
(and may be partially defined even if $M$ is totally defined, if $M$ has a trap state $q$ for which $L(M,q)=\emptyset$).
An equivalence relation $\Theta$ on the state set $Q$ is a \emph{congruence} of $M$ if
$p\Theta q$ implies both $p\in F~\Leftrightarrow~q\in F$ (that is, $F$ is \emph{saturated} by $\Theta$)
and $pa\Theta qa$ for each $a\in\Sigma$ (that is, $\Theta$ is \emph{compatible} with the action).
In particular, in any $\Theta$-class,
$pa$ is defined if and only if so is $qa$.

Clearly the identity relation $\Delta_Q$ on $Q$ is always a congruence, the \emph{trivial} congruence.
A trim automaton is \emph{reduced} if it has no nontrivial congruence.
When $\Theta$ is a congruence of $M$ and
$p~\Theta~q$ are states falling into the same $\Theta$-class, then $p\equiv q$.
Given $M$ and a congruence $\Theta$ on $M$, the \emph{factor automaton} of $M$ is
$M/\Theta~=~(Q/\Theta,\Sigma,\delta/\Theta,q_0/\Theta,F/\Theta)$ where
$p/\Theta$ denotes the $\Theta$-class of $p$, $X/\Theta$ denotes the set $\{p/\Theta:~p\in X\}$
of $\Theta$-classes for a set $X\subseteq Q$ and $\delta(q/\Theta,a)=\delta(q,a)/\Theta$ if
$\delta(q,a)$ is defined, and is undefined otherwise.

Then, for each $p\in Q$ the states $p$ and $p/\Theta$ are equivalent,
thus any automaton is equivalent to each of its factor automata.
It is also known that for any automaton $M$ there is a unique (up to isomorphism, i.e. modulo renaming states)
equivalent reduced automaton,
the one we get by trimming $M$, then factoring the useful part of $M$
by the language equivalence relation $p\Theta_Mq~\Leftrightarrow~p\equiv q$.

Any automaton can be seen as an edge-labeled multigraph and thus its \emph{strongly connected components},
or SCCs,
are well-defined classes of its states: the states $p$ and $q$ belong to the same SCC if $pu=q$ and $qv=p$
for some words $u,v\in\Sigma^*$. Clearly, this is an equivalence relation. We call an SCC \emph{trivial}
if it consists of a single state $p$ with $pu\neq p$ for any nonempty
word $u$ (that is, if it contains absolutely
no edges, not even loops), and \emph{nontrivial} otherwise.

\section{Reversible languages}
An automaton $M=(Q,\Sigma,\delta,q_0,F)$ is \emph{reversible} if $pa = qa$ implies $p = q$ for each $p,q\in Q$
and $a\in\Sigma$.
A language $L\subseteq\Sigma^*$ is \emph{reversible} if it is recognizable by some reversible automaton.
A \emph{reversible congruence} of a reversible automaton $M$ is a congruence $\Theta$
of $M$ such that the factor automaton $M/\Theta$ is also reversible.
The automaton $M$ is a \emph{reduced reversible automaton} if it has no nontrivial reversible congruence.

It is known~\cite{holzer15} that a language is reversible if and only if its minimal automaton
has no distinct states $p\neq q$, a letter $a$ and a word $w$ such that $pa=qa$ and $paw=w$
(the forbidden pattern is depicted on Figure~\ref{fig-rev-pattern}).
Equivalently, for any state $r$ belonging to a nontrivial component, and letter $a$,
the set $\{p\in Q:pa=r\}$ has to have at most one element.
\begin{center}
	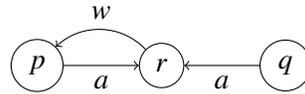
\begin{figure}[h]
		\centering\begin{tikzpicture}
		\node[draw,circle] (p) {$p$};
		\node[draw,circle,right=of p] (r) {$r$};
		\node[draw,circle,right=of r] (q) {$q$};
		\path[->]
		(p) edge [below] node [align=center] {$a$} (r)
		(q) edge [below] node [align=center] {$a$} (r)
		(r) edge [above,bend right=45] node [align=center] {$w$} (p)
		;
		\end{tikzpicture}			
	\caption{The forbidden pattern for reversible languages}
	\label{fig-rev-pattern}
	\end{figure}
\end{center}

Contrary to the general case of regular languages, there can be more than one reduced reversible automata,
up to isomorphism, recognizing the same (reversible) language. For example, see Figure~\ref{fig-harommajom} of~\cite{giovanni}.
\begin{center}
	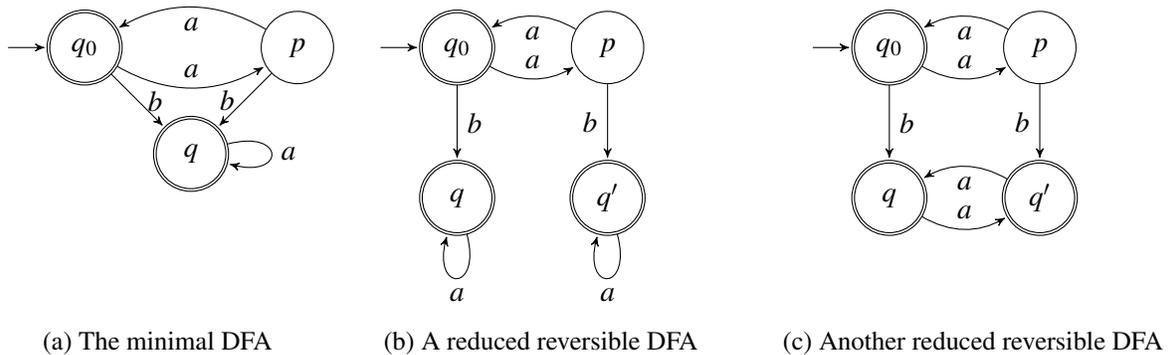
\begin{figure}[h!]
		\begin{subfigure}[b]{4.5cm}
			\centering\begin{tikzpicture}[>=stealth',shorten >=1pt,auto,node distance=2 cm, scale = 1, transform shape,initial text={}]
			\draw[white] (-1,-3.5) rectangle (3.5,1);
			\node[initial,state,accepting] (q0) {$q_0$};
			\node[state, accepting, below right of=q0] (q) {$q$};
			\node[state, above right of=q] (p) {$p$};
			\path[->]
			(q0) edge [above, bend right=30]    node [align=center]  {$a$} (p)
			(p) edge [below, bend right=30]      node [align=center]  {$a$} (q0)
			(q0) edge [right]       node [align=center]  {$b$} (q)
			(p) edge [left]      node [align=center]  {$b$} (q)
			(q) edge [loop right] 	node [align=center]  {$a$} (q)
			;			
			\end{tikzpicture}
			\caption{The minimal DFA}
		\end{subfigure}
		\begin{subfigure}[b]{5.5cm}
		\centering\begin{tikzpicture}[>=stealth',shorten >=1pt,auto,node distance=2 cm, scale = 1, transform shape,initial text={}]
			\draw[white] (-1,-3.5) rectangle (3.5,1);
			\node[initial,state,accepting] (q0) {$q_0$};
			\node[state, accepting, below of=q0] (q) {$q$};
			\node[state, accepting, right of=q] (qp) {$q'$};
			\node[state, above of=qp] (p) {$p$};
		\path[->]
		(q0) edge [above, bend right=30]    node [align=center]  {$a$} (p)
		(p) edge [below, bend right=30]      node [align=center]  {$a$} (q0)
		(q0) edge [right]       node [align=center]  {$b$} (q)
		(p) edge [left]      node [align=center]  {$b$} (qp)
		(q) edge [loop below] 	node [align=center]  {$a$} (q)
		(qp) edge [loop below] 	node [align=center]  {$a$} (qp)
		;
		\end{tikzpicture}
		\caption{A reduced reversible DFA}
	\end{subfigure}
		\begin{subfigure}[b]{5.8cm}
	\centering\begin{tikzpicture}[>=stealth',shorten >=1pt,auto,node distance=2 cm, scale = 1, transform shape,initial text={}]
			\draw[white] (-1,-3.5) rectangle (3.5,1);
	\node[initial,state,accepting] (q0) {$q_0$};
	\node[state, accepting, below of=q0] (q) {$q$};
	\node[state, accepting, right of=q] (qp) {$q'$};
	\node[state, above of=qp] (p) {$p$};
	\path[->]
	(q0) edge [above, bend right=30]    node [align=center]  {$a$} (p)
	(p) edge [below, bend right=30]      node [align=center]  {$a$} (q0)
	(q0) edge [right]       node [align=center]  {$b$} (q)
	(p) edge [left]      node [align=center]  {$b$} (qp)
	(q) edge [above, bend right=30] 	node [align=center]  {$a$} (qp)
	(qp) edge [below, bend right=30] 	node [align=center]  {$a$} (q)
	;
	\end{tikzpicture}
	\caption{Another reduced reversible DFA}
\end{subfigure}
		\caption{The case of the language $(aa)^*+a^*ba^*$}
		\label{fig-harommajom}
	\end{figure}
\end{center}
In that example the minimal automaton (depicted on Subfigure~(a)) is not reversible
and there are two nonisomorphic reversible reduced automata recognizing the same
language (with four states). We note that in this particular example there are
actually an infinite number of reduced reversible automata, recognizing the 
same language.
In~\cite{giovanni} the set of states of a minimal automaton was partitioned into
two classes: the \emph{irreversible} states are such states which are reachable
from some distinct states $p\neq q$ with the same word $w$, while the \emph{reversible}
states are those which are not irreversible. For example, in the case of Figure~\ref{fig-harommajom},
$q$ is the only irreversible state.
One of the results of~\cite{giovanni} is that if there exists an irreversible state $p$
which is reachable from a nontrivial SCC of the automaton (allowing the case when $p$
itself belongs to a nontrivial SCC, as $q$ does in the the example), then there exist
an infinite number of nonisomorphic reduced reversible automata, each recognizing the
same (reversible) language. A natural question is to precisely characterize the class of these
reversible languages.

\section{Result}
In this section we give a forbidden pattern characterization for those reversible languages
having a finite number of reduced reversible automata, up to isomorphism.

For this part, let us fix a reversible language $L$.
Let $M=(Q^*,\Sigma,\delta^*,q_0^*,F^*)$ be the minimal automaton of $L$.
We partition the states of $M$ into classes as follows: a state $q$ is a\ldots
\begin{itemize}
	\item \emph{$1$-state} if there exists exactly one word $u$ with $q_0^*u=q$;
	\item \emph{$\infty$-state} if there exist infinitely many words $u$ with $q_0^*u=q$;
	\item \emph{$\oplus$-state} if it is neither a $1$-state nor an $\infty$-state
\end{itemize}
	and orthogonally, $q$ is an\ldots
\begin{itemize}
	\item \emph{irreversible state} if there are distinct states $p_1^*\neq p_2^*\in Q^*$
	and a word $u$ such that $p_1^*u=p_2^*u=q$;
	\item \emph{reversible} if it is not irreversible.
\end{itemize}

As an example, consider Figure~\ref{fig-minimal}.
\begin{center}
	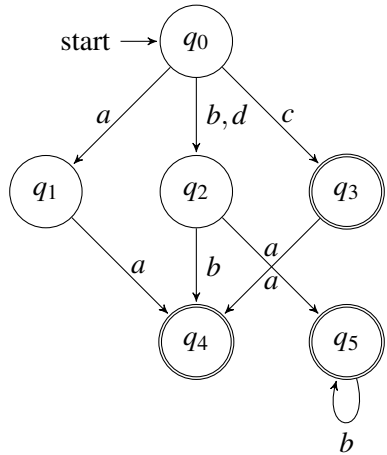
\begin{figure}[h]
	\centering\begin{tikzpicture}[>=stealth',shorten >=1pt,auto,node distance=2 cm, scale = 1, transform shape]
	\node[initial, state] (q0)  {$q_0$};
	\node[state] (q2)  [below of=q0]  {$q_2$};
	\node[state,accepting] (q3)  [right of=q2]  {$q_3$};
	\node[state] (q1)  [left of=q2]  {$q_1$};
	\node[state, accepting] (q4)  [below of=q2]  {$q_4$};
	\node[state, accepting] (q5)  [below of=q3]  {$q_5$};
	
	\path[->] (q0) edge [left]    node [align=center]  {$a$} (q1)
	(q0) edge [right]      node [align=center]  {$b,d$} (q2)
	(q0) edge [right]       node [align=center]  {$c$} (q3)
	(q1) edge [right]      node [align=center]  {$a$} (q4)
	(q2) edge [right] 	node [align=center]  {$b$} (q4)
	(q3) edge [above] 	node [align=center]  {$a$} (q4)
	(q2) edge [below] 	node [align=center]  {$a$} (q5)
	(q5) edge [loop below] node [align=center]  {$b$} (q5);
	\end{tikzpicture}
	\caption{The minimal automaton of our running example language}
	\label{fig-minimal}
\end{figure}
\end{center}
Here, states $q_0$, $q_1$ and $q_3$ are $1$-states, reachable by the words $\varepsilon$, $a$ and $c$,
respectively; $q_2$ and $q_4$ are $\oplus$-states as they are reachable by $\{b,d\}$ and $\{aa,bb,db,ca\}$,
respectively and $q_5$ is a $\infty$-state, reachable by words of the form $(b+d)ab^*$.
Moreover, $q_4$ is the only irreversible state (as $q_1a=q_3a$).

We note that our notion of irreversible states is not exactly the same as in~\cite{giovanni}:
what we call irreversible states are those states which belong to the ``irreversible part'' of
the automaton in the terms of~\cite{giovanni}. There, a state $q$ is called irreversible only 
if $p_1a=p_2a=q$ for some distinct pair $p_1\neq p_2$ of states and \emph{letter} $a$.

Clearly, a state is an $\infty$-state iff it can be reached from some nontrivial SCC of $M$.
Now we define the set $Z\subseteq Q^*$ of \emph{zig-zag states}\footnote{The coined term ``zig-zag'' originates from an earlier
version of Figure~\ref{fig-chain} on which forward edges had a ``northeast'' direction while backward edges had a ``southeast'' direction.} as follows: $Z$ is the least set $X$ satisfying
the following conditions:
\begin{enumerate}
	\item All the $\infty$-states belong to $X$.
	\item If $q\in X$ and $a\in\Sigma$ is a letter with $q\cdot a$ being defined, then $q\cdot a\in X$.
	\item If $q$ is a $\oplus$-state and $a\in \Sigma$ is a letter with $q\cdot a\in X$, then $q\in X$.
\end{enumerate}

The main result of the paper is the following:
\begin{theorem}
\label{thm-main}
	There are only finitely many reduced reversible automata recognizing $L$
	if and only if every zig-zag state of $M$ is reversible.
\end{theorem}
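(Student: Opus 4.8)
The plan is to prove the two implications separately, using throughout that any reduced reversible automaton $N$ recognizing $L$ carries a canonical surjective morphism $\varphi\colon N\to M$ onto the minimal automaton, sending each state of $N$ to its language-equivalence class; the \emph{fiber} $\varphi^{-1}(p)$ over a state $p$ of $M$ is exactly the set $\{q_0^Nw:\ q_0^*w=p\ \text{in}\ M\}$ of $N$-states reachable by words leading to $p$ in $M$. Controlling how large and how varied these fibers can be is the heart of the matter. Two elementary observations organize the case analysis. Every state of $Z$ is reached by at least two words, hence is an $\oplus$- or $\infty$-state: this holds for the $\infty$-states of rule~1, is preserved by rule~2 (a successor of a state reached by at least two words is again reached by at least two words), and is explicit in rule~3; likewise every irreversible state is reached by at least two words. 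Consequently an irreversible zig-zag state is either an irreversible $\infty$-state or an irreversible $\oplus$-state lying in $Z$.

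Before either direction I would record a structural lemma describing $Q^*\setminus Z$. If $q$ is a $\oplus$-state with $q\notin Z$, then by the contrapositive of rule~3 no successor of $q$ lies in $Z$; since every successor of a $\oplus$-state is reached by at least two words and an $\infty$-successor would already lie in $Z$, every successor of $q$ is again a $\oplus$-state outside $Z$. By induction the whole forward cone below such a $q$ consists of $\oplus$-states outside $Z$, contains no $\infty$-state, and is therefore acyclic, as any state on a nontrivial cycle is an $\infty$-state. This confines the irreversible states, in the ``good'' case, to a bounded acyclic region that is cut off from the loops.

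For the direction ``all zig-zag states reversible $\Rightarrow$ finitely many'', the key point is that in any reversible cover splitting propagates \emph{forward}: if a state $p$ is realized by two fiber states then, for each letter $a$, reversibility forces the two $a$-images to be distinct, so the fiber of $pa$ also has at least two elements. Hence the only states with nontrivial fibers are the irreversible states together with their forward cones. Under the hypothesis no irreversible state is a zig-zag state, so (by the first observation) every irreversible state lies in $\oplus\setminus Z$, and by the structural lemma its forward cone lies in the finite acyclic $\oplus$-region identified above; in particular no $\infty$-state is downstream of an irreversible state, so fibers over the looping part are singletons. Each fiber then has bounded size and there are only finitely many possible fiber patterns, and I would argue, using the analysis of reduced reversible covers from~\cite{holzer15,giovanni}, that such an automaton is determined up to isomorphism by this bounded amount of data, giving finiteness.

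For the converse, ``some zig-zag state irreversible $\Rightarrow$ infinitely many'', the $\infty$-state case is immediate: an irreversible $\infty$-state is an irreversible state reachable from a nontrivial SCC, so the cited result of~\cite{giovanni} already yields infinitely many nonisomorphic reduced reversible automata. The remaining, and genuinely hard, case is an irreversible $\oplus$-state $q\in Z$, as happens for $q_4$ in Figure~\ref{fig-minimal}. Here I would use the least-fixpoint definition of $Z$ to extract a \emph{zig-zag path} from $q$ to an $\infty$-state, namely a path moving forward along transitions and occasionally backward through $\oplus$-states, which links the forced splitting at $q$ to a nontrivial loop; the branch point $q_2$ feeding both the loop $q_5$ and the irreversible $q_4$ in the running example is the prototype. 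The task is then to turn this single loop into an infinite family of reduced reversible covers, by threading the two merging histories at $q$ through arbitrarily many distinguished copies of the looping states while certifying, for each member of the family, both reversibility and the absence of any nontrivial reversible congruence. Producing this construction and proving reducedness uniformly across the family is the step I expect to be the main obstacle; the finiteness direction, by contrast, should follow routinely once the structural lemma on $Q^*\setminus Z$ and the forward-propagation of splitting are in place.
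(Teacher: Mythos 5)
Your finiteness direction breaks at its central step. From ``splitting propagates forward'' you conclude (``hence'') that the only states with nontrivial fibers are the irreversible states together with their forward cones. That inference runs in the wrong direction: forward propagation says splits spread \emph{downstream}, not that they can only \emph{originate} at irreversible states, and the claim itself is false as stated. For the language $\{a,b\}$ (whose minimal automaton has no irreversible state and satisfies the hypothesis, as $Z=\emptyset$), the trim reversible cover with $q_0a$ and $q_0b$ two distinct accepting states has a two-element fiber over the unique $\oplus$-state; so the claim fails for arbitrary reversible covers. It also fails for \emph{reduced} covers once the hypothesis is dropped: the automaton of Figure~\ref{fig-blowing-example}(b) is reduced and reversible, yet its fiber over $q_2$ is $\{q_2,q_2'\}$, while $q_2$ is reversible and the forward cone of the only irreversible state $q_4$ is just $\{q_4\}$. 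Hence any correct version of your claim must simultaneously use reducedness \emph{and} the hypothesis that no zig-zag state is irreversible --- but your sketch invokes the hypothesis only \emph{after} the claim, to locate the irreversible states. What is actually needed here (and what the paper proves) is that for every trim reversible cover $N$ the relation $p\,\Theta\,q \Leftrightarrow (p=q)$ or $(p^*=q^*\in Z)$ is a \emph{reversible} congruence; reversibility of $\Theta$ is precisely where the hypothesis and the closure rules of $Z$ (including that no $1$-state lies in $Z$) enter. In a reduced automaton $\Theta$ must then be trivial, so fibers over zig-zag states --- in particular over every $\infty$-state --- are singletons, while fibers over non-$\infty$-states are bounded by the finite number of words reaching them; this yields a state bound, from which finiteness up to isomorphism is immediate (no appeal to the fiber-pattern analysis of~\cite{holzer15,giovanni} is needed). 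Your proposal contains no substitute for this lemma, so, contrary to your closing assessment, the finiteness direction is not the routine half.

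In the converse direction you settle only the easy case. Delegating an irreversible $\infty$-state to the quoted theorem of~\cite{giovanni} is defensible (with the caveat that their ``irreversible'' is the per-letter notion, so one must check that what reaches the nontrivial SCC is a state of their irreversible \emph{part}, which is what your per-word notion gives). But the case of an irreversible $\oplus$-state in $Z$ --- which you correctly identify as the crux --- is left entirely without proof: you name the goal (thread the two merging histories through arbitrarily many copies of the loop and certify reducedness uniformly) but give no construction and no argument that the resulting automata admit no nontrivial reversible congruence. That missing content is exactly the paper's Lemma~\ref{lem-pattern} (a prime-length cycle feeding two parallel zig-zag paths that end in inequivalent states forces every reversible-congruence factor to have at least $k$ states), Lemma~\ref{lem-rewiring} (transitions can be rewired so that this pattern appears), and the two blow-up constructions (pairing states with words of $U$ so that every $\oplus$-state acquires at least two copies, then replacing a loop by a cycle of prime length $k$). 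Note also that the paper's construction treats both of your cases uniformly, since the minimal zig-zag sequence may have length zero, so the case split and the reliance on~\cite{giovanni} are avoidable. As it stands, neither implication is established: the first rests on a false/unsupported key step, and the second is explicitly incomplete.
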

We break the proof up into several parts.
\subsection{When all the zig-zag states are reversible}
In this part we show that whenever all the zig-zag states are reversible,
and $N$ is a trim reversible automaton recognizing $L$, then there is a 
reversible congruence $\Theta$ on $N$ such that $N/\Theta$ has a bounded number of states (the bound in question is computable from the minimal automaton $M$).
So let us assume that there is no irreversible zig-zag state in $M$
and let $N=(Q,\Sigma,\delta,q_0,F)$ be a trim reversible automaton recognizing $L$.
Then, for each $q\in Q$ there exists a unique state $q^*\in Q^*$ with $q\equiv q^*$,
and the function $q\mapsto q^*$ is a homomorphism.

Now let us define the relation $\Theta$ on $Q$ as follows:
\[p\Theta q\quad\Leftrightarrow\quad (p=q)~\textrm{ or }~(p^*=q^*\in Z).\]
\begin{lemma}
	The relation $\Theta$ is a reversible congruence on $N$.
\end{lemma}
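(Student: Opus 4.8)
The plan is to verify the three defining properties of a reversible congruence in order: first that $\Theta$ is an equivalence relation, then that it is a congruence (saturates $F$ and is compatible with the action), and finally that the quotient $N/\Theta$ is reversible. Reflexivity and symmetry of $\Theta$ are immediate from the definition; transitivity follows because if $p\Theta q$ and $q\Theta r$ with not all three equal, then the nontrivial disjuncts force $p^*=q^*\in Z$ and $q^*=r^*\in Z$, whence $p^*=r^*\in Z$.

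\medskip

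For the congruence properties I would argue as follows. Saturation of $F$ is easy: if $p\Theta q$ and $p\neq q$, then $p^*=q^*$, and since membership in $F$ is determined by the equivalence class (as $p\equiv p^*$ and final states are exactly those recognizing $\varepsilon$), we get $p\in F\Leftrightarrow q\in F$. For compatibility, suppose $p\Theta q$ with $p\neq q$, so $p^*=q^*\in Z$, and let $a\in\Sigma$. Using that $q\mapsto q^*$ is a homomorphism, $(pa)^*=p^*a=q^*a=(qa)^*$; moreover $pa$ is defined iff $p^*a$ is, so $pa$ and $qa$ are simultaneously defined or undefined. When defined, I must check $(pa)^*\in Z$: this is exactly closure property~(2) of $Z$ applied to $p^*=q^*\in Z$, so $pa\,\Theta\,qa$ holds.

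\medskip

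The reversibility of $N/\Theta$ is where the real content lies, and I expect this to be the main obstacle. I must show that if two $\Theta$-classes $[p]$ and $[q]$ satisfy $[p]a=[q]a$ in the quotient, then $[p]=[q]$; equivalently, $(pa)\Theta(qa)$ should imply $p\Theta q$. The case where $pa=qa$ reduces to reversibility of $N$ itself, giving $p=q$. The interesting case is $pa\neq qa$ with $(pa)^*=(qa)^*\in Z$: I need to conclude $p^*=q^*\in Z$. Equality of the images, $p^*a=(pa)^*=(qa)^*=q^*a$, combined with reversibility of the minimal automaton $M$, yields $p^*=q^*$. The crux is then membership in $Z$: I have $(pa)^*=p^*a\in Z$, and I want $p^*\in Z$. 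This is precisely what closure property~(3) provides, \emph{provided} $p^*$ is a $\oplus$-state. Here is where the hypothesis that all zig-zag states are reversible must be invoked: if $p^*$ were a $1$-state or an $\infty$-state, one argues directly ($\infty$-states lie in $Z$ already by property~(1), and the $1$-state case must be ruled out or handled), and the assumption forbids $p^*a$ from being an \emph{irreversible} zig-zag state, which is what would otherwise block tracing membership in $Z$ back from $p^*a$ to $p^*$. I would carefully case-split on the type ($1/\oplus/\infty$) of $p^*$ and use the no-irreversible-zig-zag-state hypothesis to push $Z$-membership backward across the $a$-transition, which is the delicate step requiring the global assumption rather than a purely local computation.
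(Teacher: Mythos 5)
Your overall architecture (equivalence relation, then congruence, then reversibility of the quotient) matches the paper's, and the first two parts are fine. But the reversibility step --- which you correctly identify as the real content --- contains a genuine error: you derive $p^*=q^*$ from $p^*a=q^*a$ by invoking ``reversibility of the minimal automaton $M$''. The minimal automaton $M$ is \emph{not} reversible in general; indeed the whole setting of the paper concerns reversible languages whose minimal automaton fails to be reversible (in the running example, $q_1a=q_3a=q_4$, so $M$ has merging transitions). The correct justification, and the precise place where the global hypothesis enters, is this: $(pa)^*$ lies in $Z$, hence by the standing assumption it is a reversible \emph{state} of $M$, i.e.\ no two distinct states of $M$ reach it by a common word; since $p^*a=q^*a=(pa)^*$, this forces $p^*=q^*$. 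You instead attribute the hypothesis to the task of ``tracing membership in $Z$ back from $p^*a$ to $p^*$'', but that backward step needs no such hypothesis at all --- it follows purely from the closure conditions defining $Z$ (condition~(1) if $p^*$ is an $\infty$-state, condition~(3) if it is a $\oplus$-state). So the hypothesis is wired into the wrong step of your argument, and the step where it is actually needed rests on a false premise.

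There is a second, smaller gap: you leave the $1$-state case as ``must be ruled out or handled'' without an argument. The paper rules it out as follows: since $p\neq q$ are distinct states of the trim automaton $N$, there exist distinct words $u\neq v$ with $q_0u=p$ and $q_0v=q$; because starring is a homomorphism, $q_0^*u=q_0^*v=p^*$, so $p^*$ is reachable by at least two distinct words and hence cannot be a $1$-state. Without this observation (and with the misplaced use of the hypothesis above), your case analysis does not close, so the proof as written does not go through, even though all the needed ingredients appear somewhere in your sketch.
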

\begin{proof}
	It is clear that $\Theta$ is an equivalence relation: reflexivity and symmetry are trivial,
	and $p~\Theta~q~\Theta~r$ either entails $p=q$ or $q=r$ (in which case $p~\Theta~r$ is obvious)
	or that $p^*=q^*=r^*\in Z$ (and then, $p~\Theta~r$ also holds).
	
	Now if $p~\Theta~q$ and $p\cdot a$ is defined, then we have to show that $q\cdot a$
	is also defined and $p\cdot a~\Theta~q\cdot a$. This is clear if $p=q$. Otherwise
	we have $p^*=q^*$ is a zig-zag state,
	thus $(p\cdot a)^*=(q\cdot a)^*$ as starring is a homomorphism
	from $N$ to $M$ (thus in particular, $q\cdot a$ is defined).
	As $p^*\in Z$ and $Z$ is closed under action, we have that $p^*\cdot a=(p\cdot a)^*$ is
	also a zig-zag state, thus $p\cdot a~\Theta~q\cdot a$ indeed holds and $\Theta$ is
	a congruence on $N$.
	
	To see that $\Theta$ is a reversible congruence, assume $p\cdot a~\Theta~q\cdot a$.
	We have to show that $p~\Theta~q$. If $p\cdot a=q\cdot a$, then $p=q$ (thus $p~\Theta~q$),
	since $N$ is a reversible automaton. Otherwise, let $p\cdot a\neq q\cdot a$ 
	(hence $p\neq q$) and  $(p\cdot a)^*=(q\cdot a)^*\in Z$. By assumption on $M$,
	this state $(p\cdot a)^*$ is reversible. Thus, as $p^*\cdot a=q^*\cdot a=(p\cdot a)^*$,
	we get that $p^*=q^*$. Hence to show $p^*~\Theta~q^*$ it suffices to show that it is
	also a zig-zag state. If $p^*$ is a $\infty$-state, then it is a zig-zag state by
	definition of $Z$. Also, if $p^*$ is a $\oplus$-state, then it is still a zig-zag
	state (as $p^*\cdot a$ is a zig-zag state, we can apply Condition 3 in the definition
	of $Z$). Finally, observe that $p^*$ cannot be a $1$-state since $p$ and $q$ are
	different (reachable) states of $N$, hence there are words $u\neq v$ with 
	$q_0u=p$ and $q_0v=q$. For these words, by starring being a homomorphism we get that
	$q_0^*u=q_0^*v=p^*$($=q^*$), thus $p^*$ is reachable by at least two distinct words.
	
	Hence, $\Theta$ is indeed a reversible congruence.
\end{proof}
To conclude this case observe the following facts:
\begin{itemize}
	\item For each non-$\infty$-state $p^*$
		there exist a \emph{finite} number	$u_1,\ldots,u_{n(p^*)}$ of words leading into $p^*$ in $M$,
		thus there can be at most $n(p^*)$ states in $N$ which are equivalent to $p^*$ (since $N$ is
		trim). This bound $n(p^*)$ is computable from $M$.
	\item If $p^*$ is a $\infty$-state of $M$, then $p^*$ is a zig-zag state, thus
		all the states of $N$ equivalent to $p^*$ are collapsed into a single class of $\Theta$.
		For these states, let us define the value $n(p^*)$ to be $1$.
\end{itemize}
Hence, $n=\mathop\sum\limits_{p^*\in Q^*}n(p^*)$ is a (finite, computable) upper bound on the
number of states in the factor automaton $N/\Theta$ (hence it is an upper bound for the
number of states in any reduced reversible automaton recognizing $L$ as in that case $\Theta$
has to be the trivial congruence). Thus we have proved the first part of Theorem~\ref{thm-main}:
\begin{theorem}
\label{thm-revzigzag}
	If all the zig-zag states of $M$ are reversible, then
	there is a finite upper bound for the number of states
	of any reduced reversible automata recognizing $L$.
	Hence, in that case there exists only a finite number of nonisomorphic reduced
	reversible automata recognizing $L$.
\end{theorem}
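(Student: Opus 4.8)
The plan is to assemble the pieces already in place into a single finiteness argument. The role of the preceding Lemma together with the two counting observations is precisely to show that, for \emph{every} trim reversible automaton $N$ recognizing $L$, the relation $\Theta$ is a reversible congruence whose quotient $N/\Theta$ has at most $n=\sum_{p^*\in Q^*}n(p^*)$ states, a bound depending only on $M$. I would open the proof by recalling this: for a non-$\infty$-state $p^*$ the value $n(p^*)$ counts the finitely many words reaching $p^*$ in $M$, bounding the trim states of $N$ lying above $p^*$; and every $\infty$-state is a zig-zag state, hence reversible under the hypothesis, so the Lemma collapses all states of $N$ above it into a single $\Theta$-class.

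The first assertion of the theorem --- a uniform bound on the number of states of any reduced reversible automaton $N$ --- then follows at once. Such an $N$ is in particular trim, so the discussion above applies and $\Theta$ is a reversible congruence with $|N/\Theta|\le n$. But a reduced reversible automaton has, by definition, no nontrivial reversible congruence, so $\Theta$ must equal the identity $\Delta_Q$; hence $N\cong N/\Theta$ has at most $n$ states, and $n$ is the desired bound.

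For the second assertion I would invoke a routine finiteness principle: over the fixed finite alphabet $\Sigma$ there are, up to isomorphism, only finitely many automata with at most $n$ states, since there are finitely many choices for the number of states, for the partial transition function, for the initial state and for the set of final states. In particular only finitely many such automata are reduced, reversible and recognize $L$, which is exactly the claim.

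The substantive difficulty does not lie in this closing deduction but in the Lemma on which it rests, specifically in the reverse compatibility $p\cdot a\mathrel{\Theta}q\cdot a\Rightarrow p\mathrel{\Theta}q$. That is where the hypothesis is consumed: from $p^*\cdot a=q^*\cdot a=(p\cdot a)^*$ one may conclude $p^*=q^*$ only because the zig-zag state $(p\cdot a)^*$ is \emph{reversible}; one then certifies $p^*$ itself to be a zig-zag state via closure Condition~3 when $p^*$ is a $\oplus$-state, while $p^*$ cannot be a $1$-state since the two distinct states $p\neq q$ of $N$ are reached by distinct words both mapping onto $p^*$. Were an irreversible zig-zag state present, $\Theta$ would cease to be reversible and, as noted in the introduction, infinitely many reduced reversible automata could appear --- so the reversibility of all zig-zag states is exactly what legitimizes the collapse.
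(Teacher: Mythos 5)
Your proposal is correct and follows essentially the same route as the paper: it invokes the lemma that $\Theta$ is a reversible congruence, uses the same counting bound $n=\sum_{p^*\in Q^*}n(p^*)$ on $|N/\Theta|$, and observes that on a reduced reversible automaton $\Theta$ must be the trivial congruence, so $n$ bounds the size of any reduced reversible automaton recognizing $L$. Your explicit closing step (finitely many automata with at most $n$ states over a fixed alphabet, up to isomorphism) is left implicit in the paper but is exactly the intended conclusion.
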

\subsection{When there is an irreversible zig-zag state in $M$}
In this part let us assume that $M$ has an irreversible zig-zag state.
We will start from an arbitrary reduced reversible automaton $N$ recognizing $L$,
and then ``blow it up'' to some arbitrarily large equivalent reduced reversible automaton.
Before giving the construction, we illustrate the process in the case of Figure~\ref{fig-minimal}
(there, $q_4$ is an irreversible zig-zag state).
\begin{centering}\begin{figure*}[h]
	\begin{subfigure}{7cm}
		\begin{adjustbox}{raise=1.5cm}
			\begin{tikzpicture}[>=stealth',shorten >=1pt,auto,node distance=2 cm, scale = 1, transform shape]
			\node[initial, state] (q0)  {$q_0$};
			\node[state] (q2)  [below of=q0]  {$q_2$};
			\node[state,accepting] (q3)  [right of=q2]  {$q_3$};
			\node[state] (q1)  [left of=q2]  {$q_1$};
			\node[state, accepting] (q4)  [below of=q1]  {$q_4$};
			\node[state, accepting] (q41)  [below of=q3]  {$q_4'$};
			\node[state, accepting] (q5)  [below of=q2]  {$q_5$};
			
			\path[->] (q0) edge [left]    node [align=center]  {$a$} (q1)
			(q0) edge [right]      node [align=center]  {$b,d$} (q2)
			(q0) edge [right]       node [align=center]  {$c$} (q3)
			(q1) edge [right]      node [align=center]  {$a$} (q4)
			(q2) edge [right] 	node [align=center]  {$b$} (q4)
			(q3) edge [right] 	node [align=center]  {$a$} (q41)
			(q2) edge [right] 	node [align=center]  {$a$} (q5)
			(q5) edge [loop below] node [align=center]  {$b$} (q5);
			\end{tikzpicture}\end{adjustbox}
		\subcaption{Reversible automaton $N$ equivalent to $M$}
	\end{subfigure}
	\begin{subfigure}{10cm}
		\begin{tikzpicture}[>=stealth',shorten >=1pt,auto,node distance=2 cm, scale = 1, transform shape]
		\node[initial, state] (q0)  {$q_0$};
		\node[state] (q2)  [below left of=q0]  {$q_2$};
		\node[state] (q22)  [below right of=q0]  {$q_2'$};
		\node[state,accepting] (q3)  [right of=q22]  {$q_3$};
		\node[state] (q1)  [left of=q2]  {$q_1$};
		\node[state, accepting] (q4)  [below left of=q2]  {$q_4$};
		\node[state, accepting] (q42)  [below right of=q22]  {$q_4'$};
		\node[state, accepting] (q51)  [below of=q2]  {$q_{51}$};
		\node[state, accepting] (q52)  [below of=q22]  {$q_{52}$};
		\node[state, accepting] (q53)  [below of=q52]  {$q_{53}$};
		\node[state, accepting] (q54)  [below left of=q53]  {$q_{54}$};
		\node[state, accepting] (q55)  [below of=q51]  {$q_{55}$};

		\path[->] (q0) edge [left]    node [align=center]  {$a$} (q1)
		(q0) edge [right]      node [align=center]  {$b$} (q2)
		(q0) edge [left]      node [align=center]  {$d$} (q22)
		(q0) edge [right]       node [align=center]  {$c$} (q3)
		(q1) edge [right]      node [align=center]  {$a$} (q4)
		(q2) edge [right] 	node [align=center]  {$b$} (q4)
		(q22) edge [above] 	node [align=center]  {$b$} (q42)
		(q22) edge [right] 	node [align=center]  {$a$} (q52)
		(q3) edge [right] 	node [align=center]  {$a$} (q42)
		(q2) edge [right] 	node [align=center]  {$a$} (q51)
		(q51) edge [below] node [align=center]  {$b$} (q52)
		(q52) edge [right] node [align=center]  {$b$} (q53)
		(q53) edge [below] node [align=center]  {$b$} (q54)
		(q54) edge [below] node [align=center]  {$b$} (q55)
		(q55) edge [left] node [align=center]  {$b$} (q51)
		;
		\end{tikzpicture}
		\subcaption{The state $q_5$ is blown up by $5$, yielding $N'$}
	\end{subfigure}
	\caption{Blowing up a reversible automaton}
	\label{fig-blowing-example}
\end{figure*}
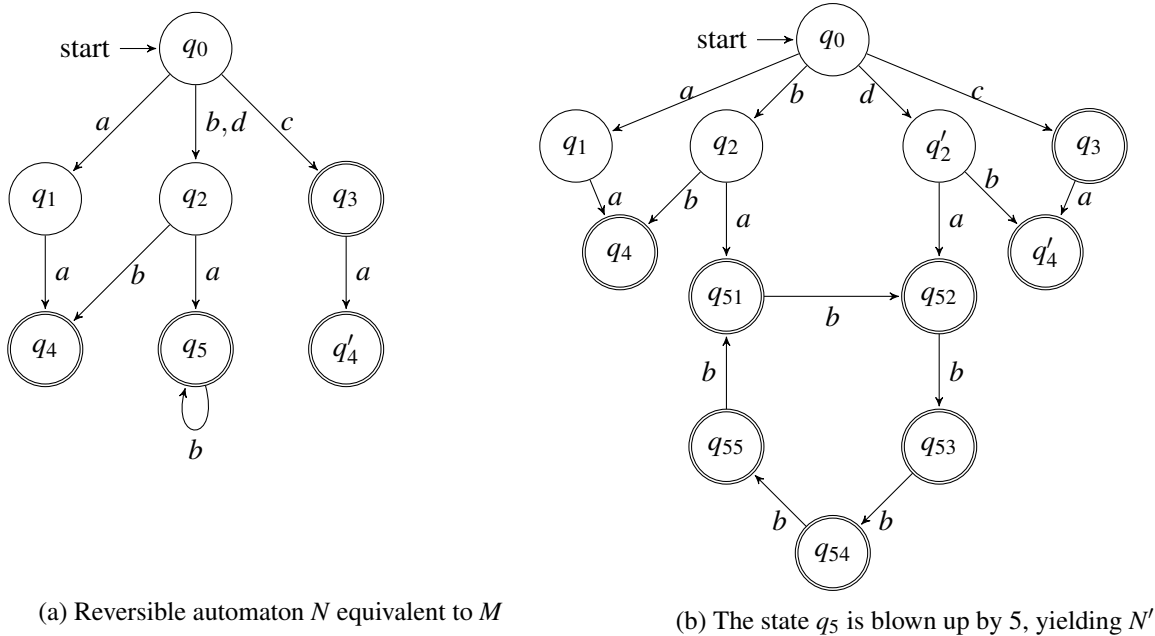
\end{centering}

On Figure~\ref{fig-blowing-example} (a), we have a reversible automaton $N$.
On (b), we replace the state $q_5$ having a loop by a cycle of length $5$
(and we also duplicate the state $q_2$ -- we can do that since $q_2$ is not a $1$-state).
Then, the automaton $N'$ is a reduced reversible automaton, whenever the length
of the cycle (which is now chosen to be $5$) is a prime number~\cite{giovanni}.

Indeed, any reversible congruence $\Theta$ on $N'$ collapses equivalent states only.
Suppose e.g. $\Theta$ collapses $q_{51}$ and $q_{53}$. Applying $b$ we get that 
$q_{52}$ and $q_{54}$ also get collapsed, and so $q_{53}$ and $q_{55}$,
and $q_{54}$ and $q_{51}$ -- thus, all the $q_{5x}$ states fall into a single $\Theta$-class
then. As $\Theta$ is assumed to be reversible, it has to collapse $q_2$ and $q_2'$
as well; applying $b$ also the states $q_4$ and $q_4'$ have to be collapsed
and finally, applying again reversibility we get that $q_1$ and $q_3$ should be
collapsed but this cannot happen as they are not equivalent states.
This reasoning works for any choice of different copies of $q_5$
(as far as the number chosen is a prime), thus $N'$ has
only the trivial reversible congruence and is a reduced reversible automaton.

The careful reader might realize that we actually followed in this reasoning
a zig-zag path from the $\infty$-state $q_5$ to the irreversible state $q_4$
during the above reasoning. In this part we show that this approach can always
be generalized whenever there exists an irreversible zig-zag state in $M$.

By the definition of the zig-zag states, if there exists some irreversible zig-zag state,
then there is a sequence
\[r_0,~(a_1,e_1),~r_1,~(a_2,e_2),~\ldots,~(a_\ell,e_\ell),~r_\ell\]
such that $a_i\in\Sigma$ and $e_i\in\{+,-\}$ for each $1\leq i\leq \ell$, and $r_i\in Q^*$ are states of $M$ for each $0\leq i\leq \ell$, moreover,
\begin{enumerate}[label=(\roman*)]
	\item if $e_i=+$ then $r_{i-1}\cdot a_i=r_i$ (denoted by $r_{i-1}\mathop{\longrightarrow}\limits^{a_i}r_i$ in the examples and patterns),
	\item if $e_i=-$ then $r_i\cdot a_{i}=r_{i-1}$ and $r_i$ is a $\oplus$-state (denoted by $r_{i-1}\mathop{\longleftarrow}\limits^{a_i}r_i$),
	\item $r_\ell$ is an irreversible state,
	\item $r_0$ is an $\infty$-state.
\end{enumerate}
Let us choose such a sequence of minimal length. Then, by minimality,
\begin{itemize}
	\item the states $r_i$ are pairwise different,
	\item all the states $r_i$, $1\leq i<\ell$ are reversible $\oplus$-states.
\end{itemize}
To see that all the states are $\oplus$-states (them being pairwise different reversible states is obvious),
observe that if $r_i$ is an $\infty$-state for $0<i$, then $r_i,\ldots,r_\ell$ is a shorter such sequence.
Hence all the states $r_i$, $0<i$ are either $1$-states or $\oplus$-states.
We show by induction that all of them are $\oplus$-states. The claim holds for 
$i=1$ as $e_1=+$ would imply that $r_1$ should be an $\infty$-state which cannot happen
thus $e_1=-$, hence $r_1$ is an $\oplus$-state. Now if $r_i$ is an $\oplus$-state,
then either $r_{i+1}=r_ia_{i+1}$ (if $e_{i+1}=+$) which implies that $r_{i+1}$
cannot be a $1$-state (thus it is an $\oplus$-state), or
$r_{i+1}$ is an $\oplus$-state (if $e_{i+1}=-$, applying ii)).

Now we extend the above sequence in both directions as follows.

First, $r_0$ being an $\infty$-state implies that there exists a state $p_0$
belonging to a nontrivial SCC of $M$ (that is, $p_0w=p_0$ for some nonempty
word $w$) from which $r_0$ is reachable. That is, there is a word
$b_1b_2\ldots b_m$ and states $p_1,\ldots,p_m$ with $p_m=r_0$
and $p_ib_{i+1}=p_{i+1}$ for each $0\leq i<m$.

Second, $r_\ell$ being an irreversible state implies that there exist
different states $s$ and $s'$ of $M$ from which 
$r_\ell$ is reachable by the same (nonempty) word.
That is, there is a word $c_1c_2\ldots c_n$ and states $s_1,s_2,\ldots,s_n=r_\ell$
such that $sc_1=s'c_1=s_1$ and $s_ic_{i+1}=s_{i+1}$ for each $1\leq i<n$.
See Figure~\ref{fig-chain}.
\begin{center}
	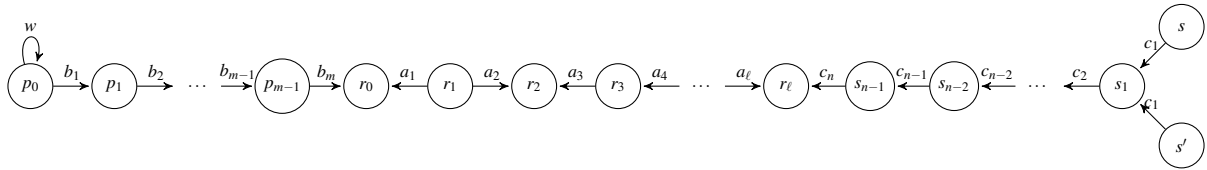
\begin{figure}
		\centering\begin{tikzpicture}[>=stealth',shorten >=1pt,auto,node distance=1.8 cm, scale = 0.62, transform shape]

		\node[state] (p0)  {$p_0$};
		\node[state] (p1) [right of=p0]{$p_1$};
		\node[state,draw=none] (pd) [right of=p1]{$\dots$};
		\node[state] (pm1) [right of=pd]{$p_{m-1}$};

		\node[state] (r0)  [right of=pm1]{$r_0$};
		\node[state] (r1) [right of=r0]{$r_1$};
		\node[state] (r2) [right of=r1]{$r_2$};
		\node[state] (r3) [right of=r2]{$r_3$};
		\node[state,draw=none] (rd) [right of=r3]{$\dots$};
		\node[state] (rl) [right of=rd]{$r_{\ell}$};

		\node[state] (sn1)  [right of=rl]{$s_{n-1}$};
		\node[state] (sn2)  [right of=sn1]{$s_{n-2}$};
		\node[state,draw=none] (sd) [right of=sn2]{$\dots$};
		\node[state] (s1) [right of=sd]{$s_1$};
		\node[state] (s) [above right of=s1]{$s$};
		\node[state] (sp) [below right of=s1]{$s'$};
		
		\path[->] 
		(p0) edge [loop above]    node [align=center]  {$w$} (p0)
		(p0) edge [above]    node [align=center]  {$b_1$} (p1)
		(p1) edge [above]      node [align=center]  {$b_2$} (pd)
		(pd) edge [above]      node [align=center]  {$b_{m-1}$} (pm1)
		(pm1) edge [above]      node [align=center]  {$b_m$} (r0)

		(r1) edge [above]    node [align=center]  {$a_1$} (r0)
		(r1) edge [above]      node [align=center]  {$a_2$} (r2)
		(r3) edge [above]      node [align=center]  {$a_3$} (r2)
		(rd) edge [above]      node [align=center]  {$a_4$} (r3)
		(rd) edge [above]      node [align=center]  {$a_\ell$} (rl)

		(s) edge [above]    node [align=center]  {$c_1$} (s1)
		(sp) edge [above]    node [align=center]  {$c_1$} (s1)
		(s1) edge [above]    node [align=center]  {$c_2$} (sd)
		(sd) edge [above]    node [align=center]  {$c_{n-2}$} (sn2)
		(sn2) edge [above]    node [align=center]  {$c_{n-1}$} (sn1)
		(sn1) edge [above]    node [align=center]  {$c_n$} (rl)
		;		
		\end{tikzpicture}
		\caption{The sequences $p_i$, $r_i$ and $s_i$.}
		\label{fig-chain}
	\end{figure}
\end{center}
Note that all the states $s_1,\ldots,s_n$ are $\oplus$-states and $p_0,\ldots,p_m$ are $\infty$-states.

In order to reduce the clutter in the notation, we treat the whole sequence $p_0,\ldots,p_{m-1},r_0,\ldots,s_1,s$ and $s'$
as a single indexed sequence $p_0,\ldots,p_{t-1},p_{t,1}$ and $p_{t,2}$ (see Figure~\ref{fig-chain-2}).
Observe that each $p_i$ (but possibly $p_{t,1}$ and $p_{t,2}$) is either an $\infty$- or a $\oplus$-state, and 
each of them is a zig-zag state. Moreover, all these states are pairwise different (thus inequivalent).
\begin{center}
	\begin{figure}
		\centering\begin{tikzpicture}[>=stealth',shorten >=1pt,auto,node distance=1.8 cm, scale = 0.62, transform shape]
		\node[state] (p0)  {$p_0$};
		\node[state] (p1) [right of=p0]{$p_1$};
		\node[state,draw=none] (pd) [right of=p1]{$\dots$};
		\node[state] (pm1) [right of=pd]{$p_{m-1}$};
		
		\node[state] (r0)  [right of=pm1]{$p_m$};
		\node[state] (r1) [right of=r0]{$p_{m+1}$};
		\node[state] (r2) [right of=r1]{$p_{m+2}$};
		\node[state] (r3) [right of=r2]{$p_{m+3}$};
		\node[state,draw=none] (rd) [right of=r3]{$\dots$};
		\node[state] (rl) [right of=rd]{$p_{m+\ell}$};
		
		\node[state] (sn1)  [right of=rl]{$p_{m+\ell+1}$};
		\node[state] (sn2)  [right of=sn1]{$p_{m+\ell+2}$};
		\node[state,draw=none] (sd) [right of=sn2]{$\dots$};
		\node[state] (s1) [right of=sd]{$p_{t-1}$};
		\node[state] (s) [above right of=s1]{$p_{t,1}$};
		\node[state] (sp) [below right of=s1]{$p_{t,2}$};
		
		\path[->] 
		(p0) edge [loop above]    node [align=center]  {$w$} (p0)
		(p0) edge [above]    node [align=center]  {$b_1$} (p1)
		(p1) edge [above]      node [align=center]  {$b_2$} (pd)
		(pd) edge [above]      node [align=center]  {$b_{m-1}$} (pm1)
		(pm1) edge [above]      node [align=center]  {$b_m$} (r0)
		
		(r1) edge [above]    node [align=center]  {$b_{m+1}$} (r0)
		(r1) edge [above]      node [align=center]  {$b_{m+2}$} (r2)
		(r3) edge [above]      node [align=center]  {$b_{m+3}$} (r2)
		(rd) edge [above]      node [align=center]  {$b_{m+4}$} (r3)
		(rd) edge [above]      node [align=center]  {$b_{m+\ell}$} (rl)
		
		(s) edge [above]    node [align=center]  {$b_{t}$} (s1)
		(sp) edge [above]    node [align=center]  {$b_{t}$} (s1)
		(s1) edge [above]    node [align=center]  {$b_{t-1}$} (sd)
		(sd) edge [above]    node [align=center]  {$b_{m+\ell+3}$} (sn2)
		(sn2) edge [above]    node [align=center]  {$b_{m+\ell+2}$} (sn1)
		(sn1) edge [above]    node [align=center]  {$b_{m+\ell+1}$} (rl)
		;		
		\end{tikzpicture}
		\caption{The sequence appearing in $M$, in an uniform notation}
		\label{fig-chain-2}
	\end{figure}
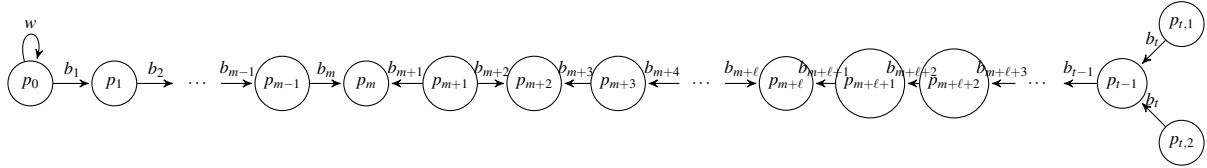
\end{center}

In the first step we show that if there is a specific pattern (which is a bit
more general than a cycle of prime length) appears in a reversible automaton $N'$,
then every factor automaton of $N'$ is ``large''.
\begin{lemma}
	\label{lem-pattern}
	Assume $N'$ is a reversible automaton, $k\geq 1$ is a prime number,
	$t\geq 0$ and $i<k$ are integers,
	$q_0,\ldots,q_{k-1}$, $p'_1,\ldots,p'_t$, $p''_1,\ldots,p''_t$ are states,
	$a_1,\ldots,a_t\in\Sigma$ are letters,
	$e_1,\ldots,e_t\in\{+,-\}$ are directions
	and $w\in\Sigma^+$ is a word
	satisfying the following conditions:
	\begin{itemize}
		\item $p'_t$ is not equivalent to $p''_t$,
		\item $q_jw=q_{j+1}$ for each $0\leq j<k$ with the convention that $q_{k}=q_0$,
		that is, indices of the $q$s are taken modulo $k$,
		\item for each $1\leq j\leq t$ with $e_j=+$ we have $p_{j-1}'a_j=p_j'$ and $p_{j-1}''a_j=p_j''$,
		\item and for each $1\leq j\leq t$ with $e_j=-$ we have $p_{j}'a_j=p_{j-1}'$ and $p_{j}''a_j=p_{j-1}''$
	\end{itemize}
	with setting $p_0':=q_0$ and $p_0'':=q_i$
	(See Figure~\ref{fig-pattern}).
	
	Then whenever $\Theta$ is a reversible congruence on $N'$, the
	states $q_j$ belong to pairwise different $\Theta$-classes.	
	(In particular, $N'/\Theta$ has at least $k$ states.)
\end{lemma}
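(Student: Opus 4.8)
The plan is to argue by contradiction: I assume $\Theta$ is a reversible congruence on $N'$ that identifies two \emph{distinct} cycle states, say $q_a~\Theta~q_b$ with $a\neq b$, and I derive $p'_t\equiv p''_t$, contradicting the first hypothesis. The argument splits into two essentially independent observations that I would set up as separate claims: one using the cyclic action of $w$ together with the primality of $k$, and one propagating an identification along the zig-zag path.

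First I would exploit the cyclic action. Since $\Theta$ is compatible with the action and $q_j w=q_{j+1}$ (indices taken modulo $k$), applying $w$ to $q_a~\Theta~q_b$ repeatedly shows that $\Theta$ is invariant under the shift $j\mapsto j+1$ on the cycle; hence, writing $d=(b-a)\bmod k\neq 0$, we obtain $q_j~\Theta~q_{j+d}$ for every $j$. Because $k$ is prime and $0<d<k$, we have $\gcd(d,k)=1$, so iterating the shift by $d$ visits every residue modulo $k$; by transitivity all the $q_j$ fall into a single $\Theta$-class, and in particular $q_0~\Theta~q_i$.

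Next I would propagate the identification $q_0~\Theta~q_i$ along the zig-zag path by induction on $j$, proving $p'_j~\Theta~p''_j$ for all $0\le j\le t$. The base case $j=0$ is exactly $p'_0=q_0~\Theta~q_i=p''_0$. For the inductive step at index $j$ there are two cases. If $e_j=+$, then $p'_j=p'_{j-1}a_j$ and $p''_j=p''_{j-1}a_j$, so $p'_{j-1}~\Theta~p''_{j-1}$ yields $p'_j~\Theta~p''_j$ directly from compatibility of the congruence with the action. If $e_j=-$, then $p'_j a_j=p'_{j-1}$ and $p''_j a_j=p''_{j-1}$; here I use that $\Theta$ is a \emph{reversible} congruence, which is equivalent to the cancellation rule ``$xa~\Theta~ya$ (both defined) implies $x~\Theta~y$'': from $p'_j a_j=p'_{j-1}~\Theta~p''_{j-1}=p''_j a_j$ we conclude $p'_j~\Theta~p''_j$. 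Taking $j=t$ gives $p'_t~\Theta~p''_t$, and since members of a common $\Theta$-class are equivalent, $p'_t\equiv p''_t$, the desired contradiction.

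The main obstacle, and the place where the hypotheses are used most delicately, is the backward step of this induction: it is precisely here that ordinary compatibility does not suffice and the \emph{reversibility} of the congruence is needed, through the ``cancellation from the right'' reformulation of reversibility in the factor automaton $N'/\Theta$. The only other point requiring care is that the shift-orbit argument genuinely reaches $q_i$ from $q_0$, which is exactly what primality of $k$ guarantees, since any nonzero step generates all of $\mathbb{Z}/k\mathbb{Z}$; for composite $k$ a collapse of two cycle states need not force all of them together, so the prime hypothesis cannot be dropped. Finally, the degenerate value $i=0$ needs no separate treatment: applying the propagation step to the trivially true $q_0~\Theta~q_0$ would already force $p'_t\equiv p''_t$, so in that case the hypotheses cannot be satisfied at all and the conclusion holds vacuously.
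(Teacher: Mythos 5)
Your proof is correct and follows essentially the same route as the paper's: assume two distinct cycle states are collapsed, use the action of $w$ together with primality of $k$ to conclude $q_0~\Theta~q_i$, then propagate this identification along the zig-zag path (forward steps by compatibility, backward steps by the cancellation property of a reversible congruence) to force $p'_t~\Theta~p''_t$, contradicting their inequivalence. Your explicit treatment of the degenerate case $i=0$ and the remark on why primality cannot be dropped are nice additions, but the core argument is the same.
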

\begin{center}
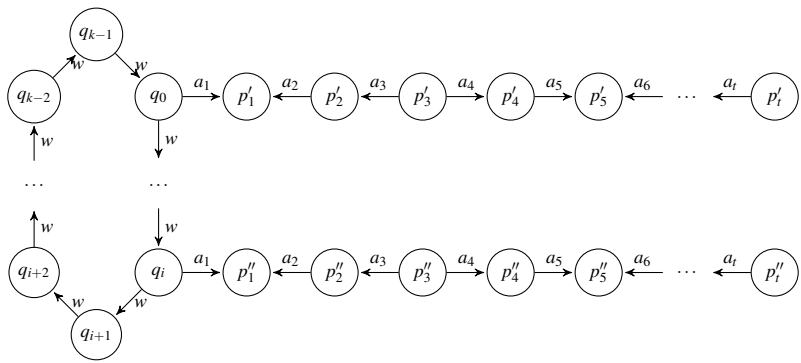
\begin{figure}
\centering\begin{tikzpicture}[>=stealth',shorten >=1pt,auto,node distance=1.8 cm, scale = 0.65, transform shape]
\node[state] (q0)  {$q_0$};
\node[state,below of=q0,draw=none] (qd) {$\ldots$};
\node[state,below of=qd] (qi) {$q_i$};
\node[state,below left of=qi] (qi1) {$q_{i+1}$};
\node[state,above left of=qi1] (qi2) {$q_{i+2}$};
\node[state,above of=qi2,draw=none] (qd2) {$\ldots$};
\node[state,above of=qd2] (qk2) {$q_{k-2}$};
\node[state,above right of=qk2] (qk1) {$q_{k-1}$};

\node[state] (p1) [right of=q0]{$p'_1$};
\node[state] (p2) [right of=p1]{$p'_2$};
\node[state] (p3) [right of=p2]{$p'_3$};
\node[state] (p4) [right of=p3]{$p'_4$};
\node[state] (p5) [right of=p4]{$p'_5$};
\node[state,draw=none] (pd) [right of=p5]{$\ldots$};
\node[state] (pt) [right of=pd]{$p'_t$};

\node[state] (p11) [right of=qi]{$p''_1$};
\node[state] (p21) [right of=p11]{$p''_2$};
\node[state] (p31) [right of=p21]{$p''_3$};
\node[state] (p41) [right of=p31]{$p''_4$};
\node[state] (p51) [right of=p41]{$p''_5$};
\node[state,draw=none] (pd1) [right of=p51]{$\ldots$};
\node[state] (pt1) [right of=pd1]{$p''_t$};

\path[->] 
(q0) edge [right] node [align=center] {$w$} (qd)
(qd) edge [right] node [align=center] {$w$} (qi)
(qi) edge [right] node [align=center] {$w$} (qi1)
(qi1) edge [right] node [align=center] {$w$} (qi2)
(qi2) edge [right] node [align=center] {$w$} (qd2)
(qd2) edge [right] node [align=center] {$w$} (qk2)
(qk2) edge [right] node [align=center] {$w$} (qk1)
(qk1) edge [right] node [align=center] {$w$} (q0)

(q0) edge [above]    node [align=center]  {$a_1$} (p1)
(qi) edge [above]    node [align=center]  {$a_1$} (p11)
(p2) edge [above]    node [align=center]  {$a_2$} (p1)
(p21) edge [above]    node [align=center]  {$a_2$} (p11)
(p3) edge [above]    node [align=center]  {$a_3$} (p2)
(p31) edge [above]    node [align=center]  {$a_3$} (p21)
(p3) edge [above]    node [align=center]  {$a_4$} (p4)
(p31) edge [above]    node [align=center]  {$a_4$} (p41)
(p4) edge [above]    node [align=center]  {$a_5$} (p5)
(p41) edge [above]    node [align=center]  {$a_5$} (p51)
(pd) edge [above]    node [align=center]  {$a_6$} (p5)
(pd1) edge [above]    node [align=center]  {$a_6$} (p51)
(pt) edge [above]    node [align=center]  {$a_t$} (pd)
(pt1) edge [above]    node [align=center]  {$a_t$} (pd1)
;		
\end{tikzpicture}
	\caption{The zig-zag pattern}
	\label{fig-pattern}	
\end{figure}
\end{center}

Before proceeding with the proof, the reader is encouraged to check that the
above described pattern appears in the automaton $N'$ of Figure~\ref{fig-blowing-example}
with the choice of $k=5$, $q_0=q_{51}$, $q_1=q_{52}$,\ldots,$q_4=q_{55}$,
$i=1$, $w=b$, $t=3$, $p_1'=q_2$, $p_1''=q_2'$, $p_2'=q_4$, $p_2''=q_4'$, $p_3'=q_1$,
$p_3''=q_3$, $a_1=a$, $a_2=b$, $a_3=a$, $e_1=-$, $e_2=+$ and $e_3=-$,
with states appearing
on the left-hand side of these equations are the states from the pattern of Lemma~\ref{lem-pattern}
while states on the right-hand side are states of $N'$.
\begin{proof}
	Assume for the sake of contradiction that $\Theta$ is a reversible congruence
	on $N'$ collapsing the states $q_\ell$ and $q_j$ for some $1\leq \ell<j\leq k$,
	that is, $q_\ell~\Theta~q_{j}$.
	
	We claim that $q_{\ell+d}~\Theta~q_{j+d}$ for each $d\geq 0$. This holds by
	assumption for $d=0$. Using induction on $d$, assuming $q_{\ell+d}~\Theta~q_{j+d}$
	we get by applying $w$ that $q_{\ell+d+1}=q_{\ell+d}w~\Theta~q_{j+d}w=q_{j+d+1}$
	as $\Theta$ is a congruence.
	Hence, writing $j=\ell+\delta$ we get that $q_{\ell+d}~\Theta~q_{\ell+\delta+d}$ for each $d\geq 0$,
	thus in particular for multiples of $\delta$: $q_{\ell+d\cdot \delta}~\Theta~q_{\ell+(d+1)\cdot \delta}$.
	
	Hence we have that $q_\ell~\Theta~q_{\ell+\delta}~\Theta~q_{\ell+2\delta}~\Theta\ldots$.
	As $k$ is assumed to be a prime number, there are integers $d_1$ and $d_2$
	with $\ell+d_1\delta\equiv 0~\mathrm{ mod }~k$ and $\ell+d_2\delta\equiv i~\mathrm{mod}~k$,
	thus $q_0~\Theta~q_i$. As $p_0'$ is defined as $q_0$ and $p_0''$ is defined as $q_i$,
	we have $p_0'~\Theta~p_0''$.
	
	Now for any integer $d\geq 0$, the relation $p'_{d}~\Theta~p''_d$ implies
	$p'_{d+1}~\Theta~p''_{d+1}$: if $e_{d+1}=+$, then by applying $a_{d+1}$ (since $\Theta$ is a congruence),
	while if $e_{d+1}=-$, then be reversibly applying $a_{d+1}$
	(since $\Theta$ is a reversible congruence). 
	Hence, it has to be the case $p'_t~\Theta~p''_t$
	which is nonsense since these two states are assumed to be inequivalent and $\Theta$ is a congruence.
\end{proof}
Observe that if some reversible automaton $N'$ recognizing $L$
admits the pattern of Lemma~\ref{lem-pattern} for some
prime number $k$, then there exists a reduced reversible automaton
of the form $N'/\Theta$ (also recognizing $L$) which then has at least $k$
states. 

In the rest of this part we show that if there exists an irreversible zig-zag
state in $M$, then we can construct such an automaton $N'$ for arbitrarily large
primes $k$, given a reversible automaton $N$ recognizing $L$.
Thus in that case it is clear that there exists an infinite number
of reduced reversible automata (up to isomorphism) recognizing $L$.

First we show that even a weaker condition suffices.
\begin{lemma}
	\label{lem-rewiring}
	Suppose $N$ is a reversible automaton recognizing $L$
	such that for each zig-zag state $p$ of $M$ there exist at least two states $p'$ and $p''$
	of $N$ with $p\equiv p'\equiv p''$ and to the $\infty$-state $p_0$ of $M$ (of Figure~\ref{fig-chain}), there
	exist at least $k$ different states $q_0,\ldots,q_{k-1}$ in $N$, each being equivalent to $p_0$,
	with $q_jw=q_{j+1}$ for each $0\leq j<k$ (again, with $q_k=q_0$).
	
	Then there exists a reversible automaton $N'$ also recognizing $L$ which admits the zig-zag pattern.
\end{lemma}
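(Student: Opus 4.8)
The plan is to leave the state set of $N$ untouched and to obtain $N'$ by \emph{rewiring} only a bounded number of transitions, so that two vertex-disjoint copies of the uniform zig-zag sequence of Figure~\ref{fig-chain-2} appear, both anchored on the given cycle $q_0,\ldots,q_{k-1}$. That cycle already supplies the ``$q$''-part of the pattern of Lemma~\ref{lem-pattern}, so I would pick any $1\le i<k$ (which forces $q_i\neq q_0$) and set $p_0':=q_0$ and $p_0'':=q_i$; the two parallel paths must then be built from these two anchors along the letters $a_j$ and directions $e_j$ read off the sequence, ending in an inequivalent pair $p_t'\not\equiv p_t''$.

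First I would fix the states carrying the two paths. Every interior state ($1\le j\le t-1$) of the sequence is a zig-zag state: the states $s_1,\ldots,s_n$ of the witnessing part are $\oplus$-states whose image under the remaining witnessing word is $r_\ell\in Z$, so they lie in $Z$ by Condition~3, while the others are the $\infty$- or $\oplus$-states $p_0,\ldots,p_{m-1},r_0,\ldots,r_\ell$, all of which lie in $Z$. Hence the hypothesis furnishes two distinct equivalent copies of each, from which I choose $p_j'\neq p_j''$; for the endpoints I take any $p_t'\equiv p_{t,1}$ and $p_t''\equiv p_{t,2}$, which are inequivalent because $p_{t,1}\neq p_{t,2}$ are distinct states of the minimal automaton. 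Since the states $p_0,\ldots,p_{t-1},p_{t,1},p_{t,2}$ are pairwise distinct in $M$, all chosen copies are pairwise distinct states of $N$. I then define $N'$ from $N$ by redirecting, for each forward step ($e_j=+$), $p_{j-1}'a_j:=p_j'$ and $p_{j-1}''a_j:=p_j''$, and for each backward step ($e_j=-$), $p_j'a_j:=p_{j-1}'$ and $p_j''a_j:=p_{j-1}''$; note each of these transitions is already defined in $N$, so this is a genuine redirection.

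Two verifications are then routine. For the language, every redirected transition is sent to a state equivalent to its previous target (equivalent to $p_j$ in a forward step, to $p_{j-1}$ in a backward step), so the map $q\mapsto q^*$ remains a homomorphism $N'\to M$ fixing the initial state and the final set; hence $L(N')=L(M)=L$. For the required inequivalence $p_t'\not\equiv p_t''$ it suffices that $p_0'\neq p_0''$ and that distinctness propagates along the path: a forward step preserves it by reversibility of $N'$ (distinct states have distinct images), and a backward step preserves it by determinism (equal states would force their successors, hence $p_{j-1}'=p_{j-1}''$, to coincide).

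The crux is the \emph{reversibility} of $N'$: I must check that no two distinct states share a target under a common letter. Two \emph{redirected} edges can collide only if a single state receives both a forward in-edge and a backward in-edge under the same letter; at an interior state $r_i$ ($0<i<\ell$) this would exhibit two distinct predecessors under one letter, contradicting the reversibility of $r_i$, and at $r_0$ it would make $r_0$ irreversible, contradicting the minimal length of the chosen sequence (the length-$0$ sequence $r_0$ would already be admissible). I expect the genuine difficulty to concentrate in two places. First, near the irreversible endpoint $r_\ell$, where minimality no longer forbids the last forward edge of the zig-zag meeting the first (backward) edge of the witnessing part on the \emph{same} letter; here one must use the freedom in choosing the witnessing path $s,s',s_1,\ldots,s_n$ (and the direction of the final zig-zag edge) so that this collision does not arise. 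Second, a redirected in-edge into some $p_j'$ may collide with a predecessor that $p_j'$ already carries in $N$; such inherited predecessors must be cleared by further local rerouting, and the argument that this clearing terminates without disturbing any edge already committed to the two paths rests precisely on the abundance of equivalent copies of each zig-zag state together with the pairwise distinctness of the $p_j$ (which keeps the rerouted transitions of distinct positions from interfering). Once reversibility is secured, $N'$ recognizes $L$, is reversible, and exhibits the pattern of Figure~\ref{fig-pattern}, so Lemma~\ref{lem-pattern} applies.
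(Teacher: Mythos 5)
Your high-level plan (rewire finitely many transitions of $N$ so that the pattern of Lemma~\ref{lem-pattern} appears, anchored on the given cycle) is the paper's plan too, but your execution diverges exactly at the point that carries all the difficulty, and there the argument is missing. You fix all the copies $p_j',p_j''$ \emph{in advance} and then redirect \emph{every} edge along both paths, forward edges included. With pre-chosen targets, collisions are generic, not exceptional: already at the first forward step, the copy $p_1'$ you picked may well carry an in-edge under $b_1$ in $N$ from some \emph{other} copy of $p_0$ (by reversibility of $N$ it has at most one such in-edge, but nothing makes its source be $q_0$), so setting $q_0b_1:=p_1'$ creates two $b_1$-edges into $p_1'$ and destroys reversibility at a perfectly reversible state --- this has nothing to do with the irreversible endpoint. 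You do see this (``a redirected in-edge into some $p_j'$ may collide with a predecessor that $p_j'$ already carries in $N$''), but your remedy --- inherited predecessors ``must be cleared by further local rerouting'', whose termination and non-interference ``rests precisely on the abundance of equivalent copies'' --- is a statement of hope, not a proof: no clearing procedure is defined, and no argument is given that it terminates or that it never disturbs edges already committed to the two paths. Since recognizing $L$ and exhibiting the pattern are the easy parts (as your own homomorphism argument shows), preserving reversibility \emph{is} the content of the lemma, so this cannot be deferred. The same applies to your second flagged difficulty at $r_\ell$: ``use the freedom in choosing the witnessing path so that this collision does not arise'' is again unproved, and in the worst case (when every in-edge of $r_\ell$ in $M$ carries the same letter) no choice of witnessing path avoids it; the resolution is rather to let $p_t'$ \emph{coincide} with the forward-path copy of $r_{\ell-1}$, which your ``vertex-disjoint, chosen-in-advance'' setup does not contemplate.

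The paper's proof avoids the cascade altogether by choosing the copies \emph{adaptively}, building a chain $N=N_0,N_1,\ldots,N_t=N'$ one step at a time. For a forward step it performs \emph{no} rewiring at all: it defines $p_{j+1}':=p_j'b_{j+1}$ and $p_{j+1}'':=p_j''b_{j+1}$ along transitions already present (these are automatically copies of $p_{j+1}$, because $q\mapsto q^*$ is a homomorphism), so forward steps can never create a collision. For a backward step it first asks whether states equivalent to $p_{j+1}$ with $b_{j+1}$-edges into $p_j'$, resp.\ $p_j''$, \emph{already exist}; if so it simply uses them (again no rewiring), and only in the remaining cases does it redirect edges, taking as \emph{sources} the spare copies of the zig-zag state $p_{j+1}$ supplied by the hypothesis --- so the only thing left to check is that the target $p_j'$ did not already have an in-edge under $b_{j+1}$, and via the homomorphism to $M$ such an edge would have to come from a second letter-predecessor of $p_j$, which the case assumption together with the reversibility of the interior states of the minimal sequence rules out. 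The paper leaves parts of this verification implicit, but its architecture reduces reversibility to a one-step, local check per rewiring, whereas your architecture turns it into an unbounded repair process that you do not control. That adaptivity --- follow existing edges whenever possible, rewire only when forced, and make the hypothesis's spare copies the sources rather than the targets of new edges --- is the idea your proposal is missing.
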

For an example reversible automaton $N$ recognizing $L$ but \emph{not} admitting the zig-zag pattern the
reader is referred to Figure~\ref{fig-notpattern}. We not prove that in these cases the transitions
can be ``rewired''.
\begin{center}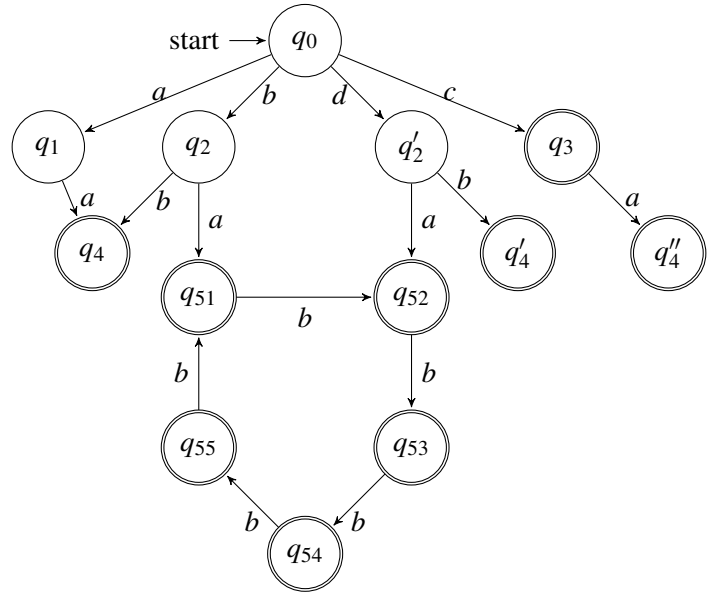
\begin{figure}
\centering\begin{tikzpicture}[>=stealth',shorten >=1pt,auto,node distance=2 cm, scale = 1, transform shape]
\node[initial, state] (q0)  {$q_0$};
\node[state] (q2)  [below left of=q0]  {$q_2$};
\node[state] (q22)  [below right of=q0]  {$q_2'$};
\node[state,accepting] (q3)  [right of=q22]  {$q_3$};
\node[state] (q1)  [left of=q2]  {$q_1$};
\node[state, accepting] (q4)  [below left of=q2]  {$q_4$};
\node[state, accepting] (q42)  [below right of=q22]  {$q_4'$};
\node[state, accepting] (q43)  [right of=q42]  {$q_4''$};
\node[state, accepting] (q51)  [below of=q2]  {$q_{51}$};
\node[state, accepting] (q52)  [below of=q22]  {$q_{52}$};
\node[state, accepting] (q53)  [below of=q52]  {$q_{53}$};
\node[state, accepting] (q54)  [below left of=q53]  {$q_{54}$};
\node[state, accepting] (q55)  [below of=q51]  {$q_{55}$};

\path[->] (q0) edge [left]    node [align=center]  {$a$} (q1)
(q0) edge [right]      node [align=center]  {$b$} (q2)
(q0) edge [left]      node [align=center]  {$d$} (q22)
(q0) edge [right]       node [align=center]  {$c$} (q3)
(q1) edge [right]      node [align=center]  {$a$} (q4)
(q2) edge [right] 	node [align=center]  {$b$} (q4)
(q22) edge [above] 	node [align=center]  {$b$} (q42)
(q22) edge [right] 	node [align=center]  {$a$} (q52)
(q3) edge [right] 	node [align=center]  {$a$} (q43)
(q2) edge [right] 	node [align=center]  {$a$} (q51)
(q51) edge [below] node [align=center]  {$b$} (q52)
(q52) edge [right] node [align=center]  {$b$} (q53)
(q53) edge [below] node [align=center]  {$b$} (q54)
(q54) edge [below] node [align=center]  {$b$} (q55)
(q55) edge [left] node [align=center]  {$b$} (q51)
;
\end{tikzpicture}
\caption{The automaton $N$ does not admit the zig-zag pattern}
\label{fig-notpattern}
\end{figure}
\end{center}

\begin{proof}
	We will construct a sequence $N=N_0$, $N_1$,\ldots, $N_t=N'$ of reversible automata,
	each recognizing $L$ (having the same set of states, and even $(N_i,p)\equiv(N_j,p)$ for each $i,j$ and $p$,
	that is, we do not change the languages recognized by any of the states of $N$)
	and sequences $p_0',\ldots,p_t'$ and $p_0'',\ldots,p_t''$ of states
	such that for each $0\leq j\leq t$ the following all hold:
	\begin{itemize}
		\item $p_0'=q_0$, $p_0''=q_1$,
		\item if $e_j=+$, then $p'_{j-1}b_j=p'_j$ and $p''_{j-1}b_j=p''_j$ for each automaton $N_\ell$ with $\ell\leq j$,
		\item if $e_j=-$, then $p'_{j}b_j=p'_{j-1}$ and $p''_{j}b_j=p''_{j-1}$ for each automaton $N_\ell$ with $\ell\leq j$,
		\item if $j<t$ and $\ell\leq j$, then $p_j$ is equivalent to both $p'_j$ and $p''_j$ in $N_\ell$,
		\item $p'_t\equiv p_{t,1}$ and $p''_t\equiv p_{t,2}$ in $N_t$.
	\end{itemize}
	If we manage to achieve this, then the automaton $N':=N_t$ indeed admits the zig-zag pattern and still recognizes $L$.
	
	We construct the above sequence $N_j$ by induction on $j$. For $j=0$, choosing $N_0=N$ and $p_0':=q_0$,
	$p_0'':=q_1$ suffices. Having constructed $N_j$, we construct $N_{j+1}$ based on whether the direction $e_{j+1}$
	is $+$ or $-$, the latter one having several subcases.
	\begin{enumerate}
		\item If $e_{j+1}=+$, then let us set $N_{j+1}:=N_j$, $p_{j+1}':=p_j'b_{j+1}$ (in $N_j$)
		and $p_{j+1}'':=p''_jb_{j+1}$ (also in $N_j$). This choice satisfies the conditions.
		\item If $e_{j+1}=-$, then we have three subcases, based on whether there exist states in $N_j$ equivalent to
			$p_{j+1}$ from which $b_{j+1}$ leads to either $p_j'$ or $p_j''$.
			\begin{enumerate}
				\item If there exist such states $r_1$ and $r_2$ with $r_1b_{j+1}=p_j'$ and $r_2b_{j+1}=p_j''$,
				then again, setting $N_{j+1}:=N_j$, $p'_{j+1}:=r_1$ and $p''_{j+1}:=r_2$ suffices.
				\item Assume there is no such $r_1$ nor $r_2$. Then, as $p_{j+1}$ is a zig-zag state, there exist,
				by the assumption on $N$, two different states $r_1$ and $r_2$, each being equivalent to $p_{j+1}$.
				Moreover, as $p_{j+1}b_{j+1}=p_j$ holds, we have that $r_1b_{j+1}$ and $r_2b_{j+1}$ (in $N_j$)
				are equivalent to $p_j$. So let us define $N_{j+1}$ as follows: $r_1b_{j+1}:=p'_j$, $r_2b_{j+1}:=p''_j$
				and for all the other pairs $(r,b)$ let us leave the transitions of $N_j$ unchanged.
				Then, setting $p'_{j+1}:=r_1$ and $p''_{j+2}:=r_2$ suffices.
				\item Finally, assume that exactly one of these predecessor states exists. By symmetry, we can assume
				that it is $r_1$, that is, $r_1b_{j+1}=p'_j$ in $N_j$ but there is no state $r$ equivalent to
				$p_{j+1}$ with $rb_{j+1}=p''_j$. Since $p_{j+1}$ is a zig-zag state, there exists some $r_2\neq r_1$
				in $N_j$, still equivalent to $p_{j+1}$. In this case we set the transitions in $N_{j+1}$
				as $r_2b_{j+1}p''_j$, leaving the other transitions unchanged suffices with $p'_{j+1}:=r_1$
				and $p''_{j+1}:=r_2$.
			\end{enumerate}
	\end{enumerate}
\end{proof}

Hence, given $N$, it suffices to construct a reversible automaton satisfying the conditions of Lemma~\ref{lem-rewiring}.
First we construct a reversible automaton $N'$ in which for each $\oplus$-state $p$ of $M$ there exist at least
two different equivalent states.
Let $U\subseteq\Sigma^*$ be the set of words $u$ 
such that $u=\varepsilon$ or $q_0u$ is \emph{not} a $\infty$-state.
Note that $U$ is
a nonempty finite set. Now let us define the state set of $N'$ as the finite set $Q'=Q\times U$, equipped with the following
transition function
\[(q,u)\cdot a~=~\begin{cases}
(qa,ua)&\hbox{if }qa\hbox{ is not an }\infty\hbox{-state},\\
(qa,u)&\hbox{otherwise}.
\end{cases}\]
For an illustration of the construction starting from $N$ of Figure~\ref{fig-blowing-example} (a), consult Figure~\ref{fig-wording}.
\begin{figure}[h]\centering
\begin{tikzpicture}[>=stealth',shorten >=1pt,auto,node distance=2 cm, scale = 1, transform shape]
\node[initial, state] (q0)  {$q_0,\varepsilon$};
\node[state] (q2)  [below left of=q0]  {$q_2,b$};
\node[state] (q2d)  [below right of=q0]  {$q_2,d$};
\node[state,accepting] (q3)  [right of=q2d]  {$q_3,c$};
\node[state] (q1)  [left of=q2]  {$q_1,a$};
\node[state, accepting] (q4)  [below of=q1]  {$q_4,bb$};
\node[state, accepting] (q4aa)  [left of=q4]  {$q_4,aa$};
\node[state, accepting] (q5)  [below of=q2]  {$q_5,b$};
\node[state, accepting] (q5d)  [below of=q2d]  {$q_5,d$};
\node[state, accepting] (q4db)  [right of=q5d]  {$q_4,db$};
\node[state, accepting] (q41)  [right of=q4db]  {$q_4',ca$};

\path[->] (q0) edge [left]    node [align=center]  {$a$} (q1)
(q0) edge [right]      node [align=center]  {$b$} (q2)
(q0) edge [left]      node [align=center]  {$d$} (q2d)
(q0) edge [right]       node [align=center]  {$c$} (q3)
(q1) edge [left]      node [align=center]  {$a$} (q4aa)
(q2) edge [right] 	node [align=center]  {$b$} (q4)
(q3) edge [right] 	node [align=center]  {$a$} (q41)
(q2) edge [right] 	node [align=center]  {$a$} (q5)
(q2d) edge [left] 	node [align=center]  {$a$} (q5d)
(q2d) edge [left] 	node [align=center]  {$b$} (q4db)
(q5) edge [loop below] node [align=center]  {$b$} (q5)
(q5d) edge [loop below] node [align=center]  {$b$} (q5d)
;
\end{tikzpicture}
\caption{All the $\oplus$-states (namely, $q_2$ and $q_4$) have several copies. (Only the trim part of the automaton is shown here.)}
\label{fig-wording}
\end{figure}
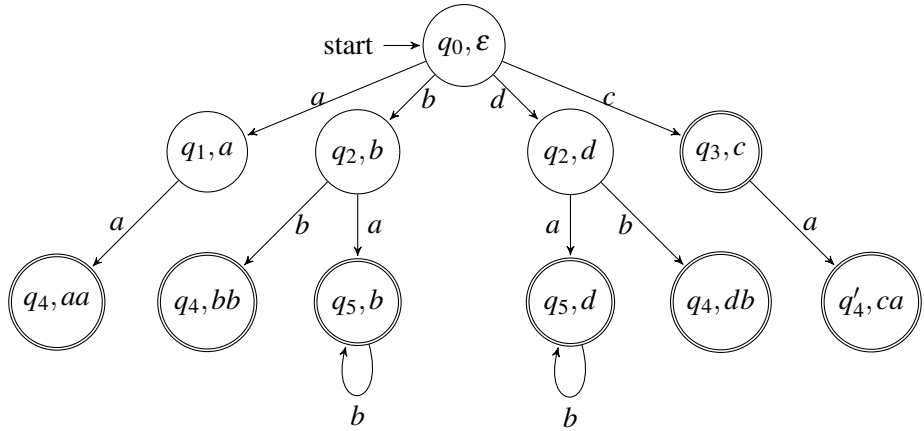

It is clear that $(q_0,\varepsilon)u=(q_0u,v)$ for some prefix $v$ of $u$, moreover, if $q_0u$ is a $\oplus$-state, then $v=u$.
Also, states of the form $(q,u)$ in $N'$ are equivalent to $q$ (if we set $F\times U$ as the accepting set).
Hence, whenever $q$ is a $\oplus$-state which can be reached by the words $u_1,\ldots,u_n$, $n\geq 2$,
then $(q,u_1),\ldots,(q,u_n)$ are pairwise different states in $N'$, reachable from $(q_0,\varepsilon)$ and thus to each $\oplus$-state $p$ of $M$
there exist at least two equivalent states in $N'$. Now starting from $N'$ we will construct an automaton $N''$ satisfying the conditions
of the Lemma.

Let $q$ be a state of $N'$, equivalent to $p_0$. Since $p_0w=p_0$ in $M$ for the nonempty word $w$,
the sequence $q$, $qw$, $qw^2$,\ldots contains some repetition. Let $i$ be the least integer with $qw^i=qw^j$ for some $j>i$.
Then if $i>0$, then we have $qw^{i-1}\cdot w=qw^i=qw^j=qw^{j-1}w$, thus $qw^{i-1}=qw^{j-1}$ since $N'$ is reversible.
Hence $q=qw^j$ for some integer $j$. In particular, $q$ belongs to a nontrivial SCC of $N'$.
Let $av$ be a shortest nonempty word with $qav=q$ (such a word exists since $q$ is in a nontrivial SCC).
For the fixed integer $k\geq 1$, let us define $N''$ as the automaton over the state set $Q'\times\{0,\ldots,k-1\}$,
with transition function
\[(q',i)b~=~\begin{cases}
(q'b,(i+1)~\textrm{mod}~k)&\hbox{if }q'=q\hbox{ and }b=a\\
(q'b,i)&\hbox{otherwise}.
\end{cases}\]
That is, we increase the index $i$ (modulo $k$) if we get the input $a$ in the state $q$, and in all the other cases the index
remains untouched.
For an example with $k=5$, see Figure~\ref{fig-circle}.
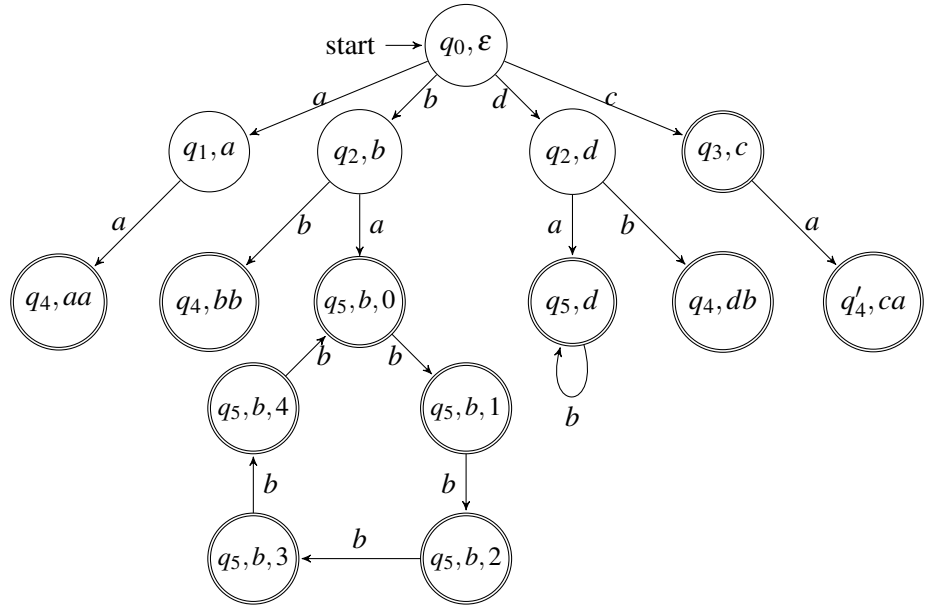
\begin{figure}[h]\centering
	\begin{tikzpicture}[>=stealth',shorten >=1pt,auto,node distance=2 cm, scale = 1, transform shape]
	\node[initial, state] (q0)  {$q_0,\varepsilon$};
	\node[state] (q2)  [below left of=q0]  {$q_2,b$};
	\node[state] (q2d)  [below right of=q0]  {$q_2,d$};
	\node[state,accepting] (q3)  [right of=q2d]  {$q_3,c$};
	\node[state] (q1)  [left of=q2]  {$q_1,a$};
	\node[state, accepting] (q4)  [below of=q1]  {$q_4,bb$};
	\node[state, accepting] (q4aa)  [left of=q4]  {$q_4,aa$};
	\node[state, accepting] (q5d)  [below of=q2d]  {$q_5,d$};
	\node[state, accepting] (q4db)  [right of=q5d]  {$q_4,db$};
	\node[state, accepting] (q41)  [right of=q4db]  {$q_4',ca$};
	\node[state, accepting,inner sep=2pt] (q50)  [below of=q2]  {\small$q_5,b,0$};
	\node[state, accepting,inner sep=2pt] (q51)  [below right of=q50]  {\small$q_5,b,1$};
	\node[state, accepting,inner sep=2pt] (q52)  [below of=q51]  {\small$q_5,b,2$};
	\node[state, accepting,inner sep=2pt] (q54)  [below left of=q50]  {\small$q_5,b,4$};
	\node[state, accepting,inner sep=2pt] (q53)  [below of=q54]  {\small$q_5,b,3$};

	\path[->] (q0) edge [left]    node [align=center]  {$a$} (q1)
	(q0) edge [right]      node [align=center]  {$b$} (q2)
	(q0) edge [left]      node [align=center]  {$d$} (q2d)
	(q0) edge [right]       node [align=center]  {$c$} (q3)
	(q1) edge [left]      node [align=center]  {$a$} (q4aa)
	(q2) edge [right] 	node [align=center]  {$b$} (q4)
	(q3) edge [right] 	node [align=center]  {$a$} (q41)
	(q2) edge [right] 	node [align=center]  {$a$} (q5)
	(q2d) edge [left] 	node [align=center]  {$a$} (q5d)
	(q2d) edge [left] 	node [align=center]  {$b$} (q4db)
	(q50) edge [left]node [align=center]  {$b$} (q51)
	(q51) edge [left]node [align=center]  {$b$} (q52)
	(q52) edge [above]node [align=center]  {$b$} (q53)
	(q53) edge [right]node [align=center]  {$b$} (q54)
	(q54) edge [right]node [align=center]  {$b$} (q50)
	(q5d) edge [loop below] node [align=center]  {$b$} (q5d)
	;
	\end{tikzpicture}
	\caption{State $(q_5,b)$ is blown up by a factor of $k=5$. (The states inequivalent to $q_5$ have the index equal to $0$
    which is not shown here.)}
	\label{fig-circle}
\end{figure}

As $av$ is a shortest word with $qav=q$, it is clear that $q$ does not occur on the $v$-path from $qa$ to $qav=q$.
Hence, in $N''$ we have $(q,i)av=(q,(i+1)~\mathrm{mod}~k)$. Moreover, if $u$ is a shortest word in $N'$ leading into $q$,
then it leads into $(q,0)$ in $N''$. Thus, in $N''$ we have the reachable states $q_0=(q,0)$, $q_1=(q,1)$, \ldots, $q_{k-1}=(q,k-1)$
with $q_iav=q_{(i+1)~\textrm{mod}~k}$ for each $i$, and still, for each $\oplus$-state $p$ there are at least two different states
in $N''$ equivalent to $p$. Hence, applying Lemma~\ref{lem-rewiring} we get that there exists some reversible automaton $N'''$
(note that the automaton $N''$ we constructed is also reversible) admitting the zig-zag pattern.

Figure~\ref{fig-rewired} shows the result of this last step: first, since $(q_5,b,0)$ has an incoming $a$-edge from $(q_2,b)$
but $(q_5,b,1)$ has no such edge, we search for another state equivalent to $q_2$, that's $(q_2,d)$. Then we set $(q_2,d)a$ to $(q_5,b,1)$.
Then, we can follow the $b$-transitions into $(q_4,bb)$ and $(q_4,db)$ respectively. After that, we should follow $a$-edges backwards
into $q_1$ and $q_3$. Hence we rewire the outgoing transitions as $(q_1,a)a=(q_4,bb)$ and $(q_3,c)a=(q_4,db)$ and all is set.
\begin{figure}[h]\centering
	\begin{tikzpicture}[>=stealth',shorten >=1pt,auto,node distance=2 cm, scale = 1, transform shape]
	\node[initial, state] (q0)  {$q_0,\varepsilon$};
	\node[state] (q2)  [below left of=q0]  {$q_2,b$};
	\node[state] (q2d)  [below right of=q0]  {$q_2,d$};
	\node[state,accepting] (q3)  [right of=q2d]  {$q_3,c$};
	\node[state] (q1)  [left of=q2]  {$q_1,a$};
	\node[state, accepting] (q4)  [below of=q1]  {$q_4,bb$};
	\node[state, accepting] (q4aa)  [left of=q4]  {$q_4,aa$};
	\node[state, accepting] (q5d)  [below of=q2d]  {$q_5,d$};
	\node[state, accepting] (q4db)  [right of=q5d]  {$q_4,db$};
	\node[state, accepting] (q41)  [right of=q4db]  {$q_4',ca$};
	\node[state, accepting,inner sep=2pt] (q50)  [below of=q2]  {\small$q_5,b,0$};
	\node[state, accepting,inner sep=2pt] (q51)  [below right of=q50]  {\small$q_5,b,1$};
	\node[state, accepting,inner sep=2pt] (q52)  [below of=q51]  {\small$q_5,b,2$};
	\node[state, accepting,inner sep=2pt] (q54)  [below left of=q50]  {\small$q_5,b,4$};
	\node[state, accepting,inner sep=2pt] (q53)  [below of=q54]  {\small$q_5,b,3$};

	\path[->] (q0) edge [left]    node [align=center]  {$a$} (q1)
	(q0) edge [right]      node [align=center]  {$b$} (q2)
	(q0) edge [left]      node [align=center]  {$d$} (q2d)
	(q0) edge [right]       node [align=center]  {$c$} (q3)
	(q1) edge [left]      node [align=center]  {$a$} (q4)
	(q2) edge [right] 	node [align=center]  {$b$} (q4)
	(q3) edge [right] 	node [align=center]  {$a$} (q4db)
	(q2) edge [right] 	node [align=center]  {$a$} (q5)
	(q2d) edge [left] 	node [align=center]  {$a$} (q51)
	(q2d) edge [left] 	node [align=center]  {$b$} (q4db)
	(q50) edge [left]node [align=center]  {$b$} (q51)
	(q51) edge [left]node [align=center]  {$b$} (q52)
	(q52) edge [above]node [align=center]  {$b$} (q53)
	(q53) edge [right]node [align=center]  {$b$} (q54)
	(q54) edge [right]node [align=center]  {$b$} (q50)
	(q5d) edge [loop below] node [align=center]  {$b$} (q5d)
	;
	\end{tikzpicture}
	\caption{The rewired automaton, admitting the zig-zag pattern. (States $(q_4,aa)$, $(q_5,d)$ and $(q_4',ca)$ are not part of the resulting trim automaton.)}
	\label{fig-rewired}
\end{figure}
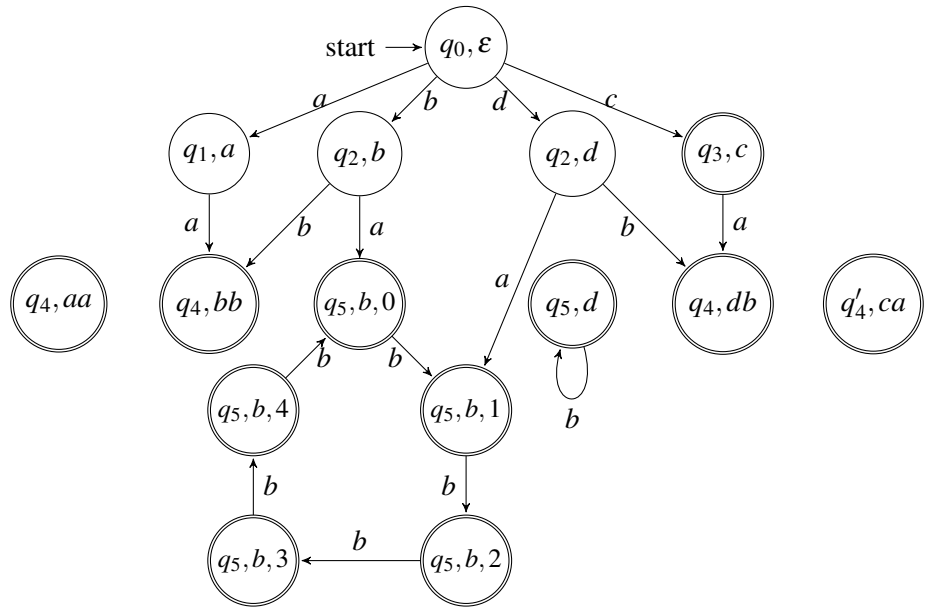

Thus, by Lemma~\ref{lem-pattern} we get the main result of the subsection:
\begin{theorem}
\label{thm-irrevzigzag}
	If there is an irreversible zig-zag state in $M$, then for an arbitrarily large $k$ one can effectively construct a
	reduced reversible automaton equivalent to $M$, having at least $k$ states.
\end{theorem}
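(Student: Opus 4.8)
The plan is to combine the whole pipeline built above with Lemma~\ref{lem-pattern}. Fix an arbitrarily large prime $k$; since there are infinitely many primes, this is no restriction. Starting from any reversible automaton $N$ recognizing $L$ (one exists because $L$ is reversible by hypothesis) and running the constructions just described (first $N'$ over $Q\times U$, then the blow-up $N''$ over $Q'\times\{0,\ldots,k-1\}$, then the rewiring $N'''$ supplied by Lemma~\ref{lem-rewiring}), I obtain a reversible automaton $N'''$ recognizing $L$ that admits the zig-zag pattern of Lemma~\ref{lem-pattern} for this $k$: the blow-up step supplies a length-$k$ cycle $q_0,\ldots,q_{k-1}$, the $Q\times U$ step supplies the required duplicate copies of every $\oplus$-state, and the rewiring step threads the inequivalent endpoints $p'_t\not\equiv p''_t$ onto the two arms of the pattern. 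Every one of these steps is explicit, so $N'''$ is effectively computable from $M$.

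Next I would pass from $N'''$ to a \emph{reduced} reversible automaton without destroying the pattern. First trim $N'''$, keeping only its useful states; this preserves both reversibility and the recognized language $L$. Every state occurring in the pattern is productive, being equivalent to a (necessarily productive) state of the trim minimal automaton $M$, and each is reachable by construction, so the entire pattern survives the trimming. Now repeatedly factor by an arbitrary nontrivial reversible congruence until none remains; the number of states strictly decreases at each factoring, so the process halts at a reduced reversible automaton $R$ that still recognizes $L$. The cumulative collapse from the trimmed $N'''$ down to $R$ is realized by a single congruence $\Theta^*$ whose quotient is $R$, and since $R$ is reversible, $\Theta^*$ is by definition a reversible congruence of the trimmed $N'''$.

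Finally I would invoke Lemma~\ref{lem-pattern} with this $\Theta^*$: because $N'''$ admits the zig-zag pattern and $\Theta^*$ is a reversible congruence, the cycle states $q_0,\ldots,q_{k-1}$ lie in pairwise distinct $\Theta^*$-classes, whence $R=N'''/\Theta^*$ has at least $k$ states. Letting $k$ range over the arbitrarily large primes produces reduced reversible automata equivalent to $M$ of unbounded size, all effectively constructible, which is exactly the claim. The one point that genuinely needs care — and the step I would regard as the main obstacle — is that minimizing $N'''$ to reduced form could \emph{a priori} merge the cycle; this is precisely what Lemma~\ref{lem-pattern} forbids, so the real content lies in having set up the pattern correctly (through the $N'$, $N''$ constructions and the rewiring of Lemma~\ref{lem-rewiring}) rather than in this concluding step. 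The only secondary subtlety is checking that the pattern remains present after trimming, which the productivity-and-reachability observation above disposes of.
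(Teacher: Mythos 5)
Your proposal is correct and follows essentially the same route as the paper: chain the $Q\times U$ duplication, the prime-length cycle blow-up, and the rewiring of Lemma~\ref{lem-rewiring} to obtain an automaton admitting the zig-zag pattern, then conclude via Lemma~\ref{lem-pattern} that any reduced quotient has at least $k$ states. Your explicit treatment of trimming and of realizing the iterated factoring as a single reversible congruence $\Theta^*$ merely spells out what the paper compresses into its observation that a reduced reversible automaton of the form $N'/\Theta$ exists.
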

Now Theorem~\ref{thm-main} is the conjunction of Theorems~\ref{thm-revzigzag}~and~\ref{thm-irrevzigzag}.
\section{Conclusion and acknowledgements}
We extended the current knowledge on the reversible regular languages by further analyzing the structure
of the minimal automaton of the language in question. In particular, we gave a forbidden pattern characterization
of those reversible languages having only a finite number of reduced reversible automata.
As the characterization relies on the existence of a “forbidden pattern” (that of Figure~\ref{fig-chain-2}), it gives an
efficient decision procedure, namely an $\mathbf{NL}$ (nondeterministic logspace) algorithm: one has to guess a
state $p_0$, then guess some loop from $p_0$ to itself, then following some back-and-forth walk in the graph
of the automaton to two distinct states $p_{t,1}$ and $p_{t,2}$. In the process we also have to check that no $1$-state
is encountered during this walk (which can clearly also be done in $\mathbf{NL}$).
It can be an interesting question to study the notion of reduced reversible automata in other reversibility
settings, as e.g. in the case of~\cite{lombardy}.
The authors wish to thank Giovanni Pighizzini, Giovanna Lavado and Luca Prigioniero for their
useful comments on a much earlier version of this paper.
\bibliographystyle{eptcs}
\bibliography{generic}
\end{document}